\newcommand{\ignore}[1]{}
\newcommand{\notinproc}[1]{#1}
\newcommand{\onlyinproc}[1]{}
\newtheorem{thm}{Theorem}[section]
\newtheorem{theorem}{Theorem}[section]
\newtheorem{lemma}[thm]{Lemma}
\newtheorem{example}[thm]{Example}
\newtheorem{corollary}[thm]{ Corollary}
\def\calA{\mathcal{A}}
\def\calX{\mathcal{X}}
\def\calD{\mathcal{D}}
\def\calQ{\mathcal{Q}}
\def\E{{\textsf E}}
\def\var{{\textsf var}}
\def\CV{{\textrm CV}}
\def\vecp{\boldsymbol{p}}
\def\vecm{\boldsymbol{m}}
\def\veca{\boldsymbol{a}}
\newcommand{\Sum}{\mathop{\textsf{sum}}}
\newcommand{\threshold}{\mathop{\textsf{thresh}}}
\def\thresh{\threshold}
\newcommand{\Capp}{\mathop{\textsf{cap}}}
\def\Cap{\Capp}
\def\Erlang{\textsf{Erlang}}
\def\eachg{\textsf{ForEach}}
\def\allg{\textsf{ForAll}}
\def\optg{\textsf{ForOpt}}
\title{Multi-Objective Weighted Sampling}
\author{Edith Cohen \\ 
Google Research\\
Mountain View, CA, USA \\
{\tt edith@cohenwang.com} 
}
\author{
\alignauthor Edith Cohen\\
       \affaddr{Tel Aviv University, Israel \\ Google Research}
       \email{edith@cohenwang.com}
}
\begin{document}

\SetKwFunction{Hash}{Hash}
\SetKwFunction{keybase}{KeyBase}
\SetKwFunction{key}{Key}
\SetKwFunction{seed}{seed}

\IEEEcompsoctitleabstractindextext{
\begin{abstract}
\footnote{This is 
  a full version of a HotWeb 2015 paper}
         {\em Multi-objective samples} are powerful and versatile summaries of large data sets. 
For a set of keys $x\in X$ and associated values 
$f_x \geq 0$, a weighted sample taken with respect to 
$f$ allows us to approximate {\em segment-sum statistics}  $\Sum(f;H) =
   \sum_{x\in H} f_x$,  for any subset $H$ of the keys, with statistically-guaranteed quality that depends on sample size and the relative weight of $H$.
When estimating $\Sum(g;H)$ for $g\not=f$, however, quality guarantees
are lost.
A multi-objective sample with respect to a set of functions $F$ 
provides for each $f\in F$ the same statistical guarantees
as a dedicated weighted sample while minimizing the summary size.

We analyze properties of multi-objective samples and present sampling schemes and meta-algortithms
for estimation and optimization while showcasing two important
application domains.
 The first are key-value data sets, where different functions $f\in F$ applied to the values
 correspond to different statistics such as moments,
thresholds, capping, and sum. A multi-objective
sample allows us to approximate all statistics in $F$. 
The second is metric spaces, where keys are points, and each $f\in F$
is defined by a set of points $C$ with $f_x$ being the service
cost of $x$ by $C$, and $\Sum(f;X)$ models
centrality or clustering cost of $C$.
 A multi-objective sample  allows us to estimate costs for each $f\in
 F$. In these domains, multi-objective samples are often of small size, 
are efficiently to construct, and enable scalable estimation and optimization.
We aim here to facilitate further
applications of this powerful technique.

\end{abstract}
}
\maketitle
\section{Introduction}

 Random sampling is a powerful tool for working with very large data sets on which exact computation, even of simple statistics, can be time and
resource consuming.  A small sample of the
data allows us to efficiently obtain approximate answers.

Consider data in the form of key value pairs 
 $\{(x,f_x)\}$, where keys $x$ are from some universe $\calX$,
 $f_x \geq 0$, and we define $f_x \equiv 0$ for keys $x\in \calX$ that
are not present in the data.
Very common statistics over such data are 
{\em segment sum statistics}
$$\Sum(f;H)=\sum_{x\in H} f_x\ ,$$ where 
$H\subset \calX$ is a {\em segment}\notinproc{\footnote{
The alternative term {\em selection} is used in the DB literature and 
the  term {\em domain} is used in the 
statistics literature. 
}}
 of $\calX$.
Examples of such data sets are IP flow keys and bytes, 
users and activity, 
or customers and distance to the nearest facility.
Segments may correspond to a certain demographic or location or 
other meta data of keys. Segment 
statistics in these example correspond respectively 
to total traffic, activity, or service cost of the segment.

When the data set is large, we can compute a weighted sample which
includes each key $x$ with probability (roughly) proportional to $f_x$ and allows us to estimate segment sum statistics  $\widehat{\Sum}(f;H)$ for query segments $H$.
Popular weighted sampling schemes \cite{SSW92,Tille:book}  include
Poisson Probability Proportional 
  to Size (pps) \cite{HansenH:1943},
VarOpt \cite{Cha82,varopt_full:CDKLT10}, and the bottom-$k$  
schemes \cite{Rosen1997a,bottomk07:ds,bottomk:VLDB2008}
Sequential Poisson  (priority) \cite{Ohlsson_SPS:1998,DLT:jacm07} and 
PPS without replacement (ppswor) \cite{Rosen1972:successive}.

 These weighted samples provide us with nonnegative unbiased estimates
 $\widehat{\Sum}(g;H)$ for any segment and $g\geq 0$ (provided that
$f_x >0$ when $g_x >0$).
For all statistics $\Sum(f;H)$ we obtain
statistical guarantees on estimation quality: The error, measured by the
coefficient of variation (CV), which is the standard deviation divided by the mean, is
at most the inverse of the square root of the size of the sample
multiplied by the fraction $\Sum(f;H)/\Sum(f,{\cal X})$ of `` weight'' that is due to the segment $H$.
This trade-off of quality (across segments) and sample size
are (worst-case) optimal. Moreover, the estimates are
well-concentrated in the Chernoff-Bernstein sense:  The
probability of an error that is $c$ times the CV  decreases
exponentially in $c$.

 In many applications, such as the following examples,  there are
 {\em multiple} sets of values $f\in F$ that are associated with the
 keys:
(i) Data records can come with explicit multiple weights, as with
activity summaries of customers/jobs that specify both bandwidth
and computation consumption.
(ii) Metric objectives, such as our service cost example, where each 
configuration of facility locations induces a different set of distances and
hence service costs. 
(iii) The raw data can be specified in terms of a set of key value
pairs $\{(x,w_x)\}$ but we are interested in different functions $f_x
\equiv f(w_x)$ of the values that correspond to different statistics
such as

\smallskip
\begin{tabular}{l|l}
\textcolor{blue}{Statistics} & \textcolor{blue}{function $f(w)$} \\
\hline
 {\em count} & $f(w) =1$ for $w>0$ \\
 {\em sum} & $f(w)=w$ \\
 {\em threshold} with
$T>0$ & $\threshold_T(w)=I_{w \geq T}$ \\
{\em moment} with
$p>0$ & $f(w)=w^p$ \\
{\em capping} with  $T>0$ & $\Cap_T =\min\{T,w\}$
\end{tabular}

\begin{example}
Consider a toy data set \textcolor{blue}{$\calD$}:\\
\textcolor{blue}{
$(u1,5)$, $(u3,100)$, $(u10,23)$, $(u12,7)$, $(u17,1)$,
$(u24,5)$, $(u31,220)$, $(u42,19)$, $(u43,3)$, $(u55,2)$}\\
For a segment $H$ with
$H\cap {\cal D}=\{$\textcolor{blue}{$u3$},
\textcolor{blue}{$u12$}, \textcolor{blue}{$u42$},
\textcolor{blue}{$u55$}$\}$, we have
\textcolor{purple}{$\text{sum}(H) = 128$},
\textcolor{purple}{$\text{count}(H) = 4$},
\textcolor{purple}{$\threshold_{10}(H) = 2$},
\textcolor{purple}{$\Cap_{5}(H) = 17$}, and
\textcolor{purple}{$\text{2-moment}(H) = 10414$}.
\end{example}

For these applications, we are interested in a summary
that can provide us with
estimates with statistically guaranteed quality for each $f\in F$.
The naive solutions are not satisfactory: 
We can compute a weighted
sample taken with respect to a particular  $f\in F$, but the quality of the estimates $\Sum(g;H)$ rapidly degrades
with the dissimilarity between $g$ and $f$.
We can compute a dedicated sample for each 
$f\in F$, but the total summary size can be much larger than necessary.
 {\em Multi-objective} samples, a notion crystallized
in \cite{multiw:VLDB2009}\footnote{The 
collocated model},
provide us with the desired statistical guarantees on quality
with minimal summary size.

Multi-objective samples build on the classic notion of 
{\em sample coordination}
\cite{KishScott1971,BrEaJo:1972,Saavedra:1995,ECohen6f, Rosen1997a,Ohlsson:2000}.  In a nutshell, coordinated samples are locality sensitive hashes of
$f$, mapping similar $f$ to similar samples.
A multi-objective sample is (roughly) the union
$S^{(F)} = \bigcup_{f \in F}  S^{(f)}$  
of all the keys that are included
in coordinated weighted samples $S^{(f)}$ for $f\in F$.
Because the samples are coordinated, the number of distinct keys included and hence the size of $S^{(F)}$ is (roughly) as small as possible.
Since for each $f\in F$, the sample $S^{(F)}$ ``includes'' the dedicated sample
$S^{(f)}$, the estimate quality from $S^{(F)}$ dominates that of $S^{(f)}$.

\ignore{
Multi-objective samples can be considered
in any scenario where there are multiple sets $f\in F$ of value
$f_x \geq 0$ assigned to a set $x\in X$ of keys 
(i.e, the values need not correspond to function of the form $f(w_x)$).
Another important
application domain are metric objectives, such as the cost of
clustering or facility location.
Here we have a set $X$ of point in a metric space.
Each {\em configuration}  $f$ of cluster centers or facility placement induces a
cost $f_x \geq 0$ for each point $x\in X$.   We are interested in estimating the total cost $V(f) = \sum_x f_x$ of the configuration $f$.  To do so for a particular configuration,  we can use a small weighed sample of points $S\subset X$ according (roughly) to $f_x$ (leaving aside for now the question on how to obtain such a sample without knowing the full set $f$).
The sample $S$ is a summary of $X$ that
can be used to estimate the cost of other configurations $g\not= f$, but the quality degrades again with the dissimilarity between $g$ and $f$.
When we are interested in a set $F$ of different configurations, and in
a summary $S$ that would allow us to
estimate $V(f)$ for each $f\in F$ with quality guarantees, we can use a multi-objective sample.
}

 In this paper, we review the definition of multi-objective samples,
 study their properties, and present efficient sampling schemes.  We
 consider both general sets $F$ of objectives and  families $F$
 with special structure.  By exploiting special structure, we can bound
 the {\em overhead}, which is the increase factor in sample size
 necessary to meet multi-objective quality guarantees, and obtain
efficient sampling schemes that avoid dependence of the computation on $|F|$.

\subsection{Organization}
In Section~\ref{single:sec}
we review (single-objective) weighted sampling 
focusing on the Poisson pps and bottom-$k$ sampling schemes.
We then review the definitions  \cite{multiw:VLDB2009}
and establish properties of multi-objective samples.
In Section~\ref{multi:sec} we study multi-objective pps samples.
We show that the multi-objective sample size is also
{\em necessary} for meeting  the quality guarantees for segment
statistics for all $f\in F$.  We also show that the guarantees are met
when we use upper bounds on the multi-objective pps sampling
probabilities instead of working with the exact values.
In Section~\ref{bottomkMO:sec} we study 
multi-objective bottom-$k$ samples.

In Section~\ref{basis:sec} we establish a fundamental property of
multi-objective samples:  We define the {\em sampling closure}
$\overline{F}$ of a set of objectives $F$, as 
all functions $f$ for which a multi-objective sample $S^{(F)}$ meets
the quality guarantees for segment statistics.
Clearly $F \subset \overline{F}$ but we 
show that the closure $\overline{F}$ also includes
every  $f$ that is a non-negative linear combination of functions from $F$.

In Section \ref{monotone:sec}, we consider data sets in the form of
key value pairs and the 
family $M$ of all monotone non-decreasing functions of the values.
This family includes most natural statistics, such as our examples of count, sum, threshold,
moments, and capping.
Since $M$ is infinite, it is inefficient to apply a generic multi-objective
  sampling algorithm to compute $S^{(M)}$.
We present efficient near-linear sampling
  schemes for $S^{(M)}$ which also apply over streamed or
distributed data.    Moreover, we establish a  bound on the
sample size of  $\E[|S^{(M)}|] \leq
k \ln n$,  where  $n$ is the number of keys in our data set and $k$ is
the reference size of the single-objective samples $S^{(f)}$ for each
$f\in M$.  The design is based on a surprising relation  to 
All-Distances Sketches \cite{ECohen6f,ECohenADS:TKDE2015}.
Furthermore, we establish that (when key weights are unique), a sample of size 
$\Omega(k \ln n)$ is necessary: Intuitively, the ``hardness''  stems from the need to 
support all threshold functions.

In Section \ref{allcaps:sec} we study  the set
 $C = \{\Cap_T \mid T>0\}$
of all capping functions.  The closure $\overline{C}$ includes
all concave $f\in M$ with at most a linear growth (satisfy $f'(x) \leq 1$ and
$f''(x) \leq 0$).
Since $C \subset M$,  the multi-objective sample $S^{(M)}$ includes
$S^{(C)}$ and provides estimates with statistical
guarantees for all $f\in \overline{C}$.  The more specialized sample
$S^{(C)}$, however,  can be much smaller than $S^{(M)}$.
We design an efficient algorithm for computing  $S^{(C)}$ samples.

In Section~\ref{metric:sec} we discuss metric objectives and 
multi-objective samples as summaries of a set of points that allows us
to approximate such objectives.
\ignore{
Our keys here are a set of points $X$ in a metrics space $M$. 
Our ``functions'' $f_Y:X$ are indexed by a subset of points
$Y \subset M$ so that  $f_Y(x)$ is a function of the 
set of distances $\{d(y,x) \mid y\in Y\}$.  For example,  $f_Y(x) =
d(Y,x)^2$ is the squared distance of $x$ from $Y$.  
We are interested in a summary from which we can 
estimate $Q(Y) = \sum_{x\in X} f_Y(x))$  when $Y$ is taken from a
set $\cal Y$.
}

In Section~\ref{eachall:sec}  we discuss different types of 
statistical guarantees across functions $f$ in settings where we are
only interested in statistics $\Sum(f ; \calX)$ over the full data.  
 Our basic multi-objective samples analyzes the sample size required
 for \eachg, where the
 statistical guarantees apply to each estimate $\widehat{\Sum}(f;H)$
 in isolation.  In particular, also to each estimate over
 the full data set.
\allg\ is much stronger and bounds 
the (distribution of) the maximum relative error of 
estimates $\widehat{\Sum}(f ; \calX)$ for all $f\in F$.  Meeting \allg\ typically 
necessitates a larger multi-objective sample size than meeting \eachg.

In section~\ref{opt:sec} we present a meta-algorithm for optimization
over samples.  The goal is to maximize a (smooth) function of
$\Sum(f ; \calX)$ over $f\in F$.  When $\calX$ is large, we can
instead perform the optimization over a small multi-objective sample
of $\calX$.   This framework has important
applications to metric objectives and estimating loss of a model from examples.
 The \optg\ guarantee is for a sample size that facilitates such
 optimization, that is, the approximate maximizer over the sample is an
 approximate maximizer over the data set.  This guarantee is stronger
 than \eachg\ but generally weaker than \allg.  We make a key
 observation that with a
\eachg\ sample we are only prone to testable one-sided errors on the
optimization result.  Based on that, we present an
 adaptive algorithm where the sample size is increased until
 \optg\ is met.  This framework unifies and generalizes previous work of
 optimization over coordinated samples \cite{topk:conext06,binaryinfluence:CIKM2014}.

We conclude in Section~\ref{conclu:sec}.

\onlyinproc{Due to the page limitation, many details are deferred to
  the full version of the paper (\url{http://arxiv.org/abs/1509.07445}).}


\section{Weighted sampling (single objective)} \label{single:sec}

We review weighted sampling schemes with respect to 
a set of values $f_x$, focusing on preparation for the
multi-objective generalization.     The schemes are specified in terms of
a sample-size parameter $k$ which allows us to trade-off 
representation size and estimation quality.

\subsection{Poisson  Probability Proportional to Size (pps)}
The pps sample $S^{(f,k)}$ includes
each key $x$ independently with probability
\begin{equation} \label{ppsprob:eq}
p^{(f,k)}_x = \min\{1, k \frac{f_x}{\sum_y f_y}\} \ .
\end{equation}

 \begin{example} \label{example2}
The table below lists pps sampling probabilities $p^{(f,3)}_x$ ($k=3$,
rounded to the
nearest hundredth)  for keys in our example data for
sum ($f_x = w_x$), $\threshold_{10}$ ($f_x = I_{w_x \geq 10}$), and $\Cap_5$ ($f_x = \min\{5,w_x\}$).
The number in parenthesis is 
$\Sum(f,\calX)=\sum_x f_x$. We can see
that sampling probabilities highly vary between functions $f$.\\\vspace{0.03in}\\
{\tiny
\begin{tabular}{l||r|r|r|r|r|r|r|r|r|r}
\hline
 key &   \textcolor{blue}{u1} & \textcolor{blue}{u3} & \textcolor{blue}{u10} & \textcolor{blue}{u12} &\textcolor{blue}{u17} & \textcolor{blue}{u24} & \textcolor{blue}{u31} & \textcolor{blue}{u42} & \textcolor{blue}{u43}  &\textcolor{blue}{u55} \\
 $w_x$ & \textcolor{blue}{$5$} & \textcolor{blue}{$100$} & \textcolor{blue}{$23$} & \textcolor{blue}{$7$} & \textcolor{blue}{$1$} & \textcolor{blue}{$5$} & \textcolor{blue}{$220$} & \textcolor{blue}{$19$} & \textcolor{blue}{$3$} & \textcolor{blue}{$2$}
  \\
\hline
\textcolor{purple}{$\Sum$} (385)   & 0.04  & 0.78  & 0.18 & 0.05 &
0.01 & 0.04 &  1.00 &
0.15 & 0.02 & 0.02\\
\textcolor{purple}{$\threshold_{10}$} (4) & 0.00  & 0.75  & 0.75 &
0.00  & 0.00 & 0.00 & 0.75 & 0.75 & 0.00& 0.00   \\
 \textcolor{purple}{$\Cap_5$}  (41)  & 0.37 & 0.37  & 0.37 & 0.37 &
 0.07 & 0.37 & 0.37 & 0.37  & 0.22  & 0.15 
\end{tabular}
 }
 \end{example}

PPS samples can be computed by association a random
 value $u_x \sim U[0,1]$ with each key $x$ and including the key in
 the sample if $u_x \leq p_x^{(f,k)}$.
 This formulation to us when there are multiple objectives as it facilitates the coordination of samples taken with respect to the different objectives.
Coordination is achieved using the same set $u_x$.

\subsection{Bottom-$k$ (order) sampling} Bottom-$k$ sampling unifies
priority (sequential Poisson)
  \cite{Ohlsson_SPS:1998,DLT:jacm07} and pps without replacement (ppswor) sampling \cite{Rosen1972:successive}. 
To obtain a bottom-$k$ sample for $f$ 
we associate a random value $u_x \sim U [0,1]$ with 
each key.   To obtain a ppswor sample we use $r_x \equiv -\ln(1-u_x)$
and to obtain a priority sample we use  $r_x 
\equiv u_x$.  The bottom-$k$ sample $S^{(f,k)}$ for $f$ contains 
the $k$ keys with minimum {\em  $f$-seed}, where
$$f\text{-}\seed{x} \equiv \frac{r_x}{f_x}\ .$$
To support estimation, we also retain the {\em threshold},
$\tau^{(f,k)}$, which is defined to be the $(k+1)$st smallest $f$-seed.

\subsection{Estimators}
We estimate a statistics $\Sum(g;H)$ from a weighted sample $S^{(f,k)}$
using the {\em inverse probability} estimator \cite{HT52}:
\begin{equation} \label{ppsest}
\widehat{\Sum}(g;H) = \sum_{x\in H\cap S}  \frac{g_x}{p^{(f,k)}_{x}}\ . 
\end{equation}
The estimate is always nonnegative and is unbiased when the
functions satisfy $g_x >0 \implies f_x >0$ (which ensures that any key $x$
with $g_x>0$ is sampled with positive probability).
To apply this estimator, we need to compute $p^{(f,k)}_{x}$ for $x\in S$.  To do so with pps samples \eqref{ppsprob:eq}
we include the sum $\sum_x f_x$ with $S$  as auxiliary information.

For  bottom-$k$ samples, inclusion probabilities of keys are
not readily available.  We therefore use the inverse probability
estimator \eqref{ppsest} with
{\em conditional} probabilities $p^{(f,k)}_{x}$
  \cite{DLT:jacm07,bottomk:VLDB2008}:
A key $x$, fixing the randomization $u_y$ for all other keys, is
sampled if and only if $f\text{-}\seed{x} < t$, where $t$ is the $k$th 
smallest $f$-seed among keys $y\not= x$.   
For $x\in S^{(f,k)}$, the $k$th smallest $f$-seed among
other keys is $t=\tau^{(f,k)}$, and thus
\begin{equation}\label{inclusiont}
p^{(f,k)}_{x} = \Pr_{u_x \sim U[0,1]}\left[\frac{r_x}{f_x}<  \tau^{(f,k)}\right]\ .
\end{equation}
Note that the right hand side expression for probability  is equal to
$1-e^{-f_x t }$ with ppswor and to $\min\{1, f_x t \}$ with
priority sampling.

\ignore{
\begin{equation} \label{ppsworest}
\hat{Q}(g;H) = \sum_{x\in H\cap S}  \frac{g(w_x)}{p^{(f, \tau^{(f,k)})}_x}\ . 
\end{equation}
}

\subsection{Estimation quality}

We consider the variance and concentration of our estimates.
A natural measure of estimation quality of our unbiased estimates is
the coefficient of variation (CV),
which is the ratio of the standard deviation to the mean. 
We can upper bound the CV  of our estimates \eqref{ppsest}
of $\Sum(g;H)$ in terms of the (expected) 
sample size $k$ and the relative $g$-weight  of the segment $H$,
defined as
$$q^{(g)}(H) = \frac{\Sum(g;H)}{\Sum(g;\calX)}\ .$$
To be able to express a bound on the CV when we estimate 
a statistics $\Sum(g;H)$ using a weighted sample taken with respect to 
$f$, we define the {\em disparity}  between $f$ and $g$ as 
$$\rho(f,g) = \max_{x}\frac{f_x}{g_x}   \max_{x}\frac{g_x}{f_x}\ .$$
The disparity always satisfies $\rho(f,g) \geq 1$ and we have equality
$\rho(f,g)=1$ only when $g$ is a scaling of $f$, that is, equal to $g=c f$
for some $c>0$.  We obtain the following upper bound:



\begin{theorem} \label{cvbound:thm}
For pps samples and the estimator \eqref{ppsest},
$$\forall g \forall H,\ \CV[\widehat{\Sum}(g;H)] \leq \sqrt{\frac{\rho(f,g)}{q^{(g)}(H) k }}\ .$$  
For bottom-$k$ samples, we replace $k$ by $k-1$.
\end{theorem}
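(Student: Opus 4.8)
The plan is to bound the variance $\var[\widehat{\Sum}(g;H)]$ in each scheme and divide by the squared mean $\Sum(g;H)^2$ (the estimator \eqref{ppsest} is unbiased) to obtain the squared CV. Throughout I abbreviate $a=\max_x f_x/g_x$ and $b=\max_x g_x/f_x$, so that $\rho(f,g)=ab$.

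For Poisson pps the inclusions are independent, hence $\var[\widehat{\Sum}(g;H)]=\sum_{x\in H}(g_x^2/p_x)(1-p_x)$ with $p_x=p^{(f,k)}_x$. Keys with $p_x=1$ drop out; for the rest I substitute $p_x=kf_x/\Sum(f;\calX)$ and use $(1-p_x)\le 1$ to get $\var\le(\Sum(f;\calX)/k)\sum_{x\in H}g_x^2/f_x$. Two disparity inequalities then finish it: $g_x^2/f_x=g_x(g_x/f_x)\le b\,g_x$ bounds the sum by $b\,\Sum(g;H)$, while $f_y\le a\,g_y$ for all $y$ gives $\Sum(f;\calX)\le a\,\Sum(g;\calX)$; together they yield $\var\le(\rho(f,g)/k)\Sum(g;\calX)\Sum(g;H)$, which is the claim after dividing by $\Sum(g;H)^2$.

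For bottom-$k$ I work with the conditional probabilities of \eqref{inclusiont}. Fixing the seeds of all $y\ne x$, key $x$ is sampled iff its $f$-seed is below $t_x$, the $k$-th smallest $f$-seed among the others, and the single-key inverse-probability estimate for $x$ is then conditionally unbiased with variance $(g_x^2/p_x)(1-p_x)$. The per-key engine is the inequality $(1-p_x)/p_x\le 1/(f_x t_x)$, valid for priority ($p_x=\min\{1,f_x t_x\}$) and, via $1-e^{-z}\ge z e^{-z}$, for ppswor ($p_x=1-e^{-f_x t_x}$); hence each conditional variance is at most $g_x^2/(f_x t_x)$. Summing the per-key variances (legitimate since, by a standard property of coordinated bottom-$k$ estimates, their covariances are non-positive) and applying $g_x^2/f_x\le b\,g_x$ exactly as before, everything reduces to bounding $\E[1/t_x]$.

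The replacement of $k$ by $k-1$ emerges here. For ppswor the $f$-seeds are independent exponentials of rate $f_y$, so by memorylessness $t_x$ is a sum of $k$ inter-arrival gaps whose rates never exceed $W=\sum_{y\ne x}f_y$; thus $t_x$ stochastically dominates a $\mathrm{Gamma}(k,W)$ variate and $\E[1/t_x]\le \E[1/\mathrm{Gamma}(k,W)]=W/(k-1)\le\Sum(f;\calX)/(k-1)$. Combined with $\Sum(f;\calX)\le a\,\Sum(g;\calX)$ this reproduces the pps bound with $k-1$ in place of $k$. I expect the genuine obstacle to be this last step for \emph{priority} sampling, whose scaled-uniform seeds are not memoryless and so lack the clean Gamma structure; there the bound $\E[1/t_x]\le\Sum(f;\calX)/(k-1)$ must be read off the order statistics of the $\{u_y/f_y\}$ directly, and verifying the non-positivity of the estimate covariances is a secondary technical point.
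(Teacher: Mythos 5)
Your proposal is correct and takes essentially the same route as the paper's own proof: for ppswor you use the identical per-key conditional variance bound via $e^{-z}/(1-e^{-z})\le 1/z$, the same memorylessness/stochastic-domination argument comparing $t_x$ to an $\Erlang(W,k)$ variable (whose inverse has expectation $W/(k-1)$, which is exactly the paper's Erlang-density integral), zero/non-positive covariances across keys, and the same two disparity inequalities, while your pps argument is the simplified subset the paper alludes to. The one piece you flag as an obstacle---priority sampling---is also left unproved in the paper, which defers it to a generalization of the cited analysis, so your treatment matches the paper's in both substance and coverage.
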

The proof for $\rho=1$ is standard for pps,
provided in \cite{ECohen6f,ECohenADS:TKDE2015} for ppswor, and in
\cite{Szegedy:stoc06} for priority samples.
The proof for $\rho\geq 1$ for ppswor is provided in
Theorem \ref{ppsworcvrho:thm}.
 The proof for pps is simpler, using a subset of the arguments.  The proof for
priority can be obtained by generalizing \cite{Szegedy:stoc06}.

 Moreover, the estimates obtained from these 
weighted sample are  {\em concentrated} in the 
Chernoff-Hoeffding-Bernstein sense.  We provide the proof for the multiplicative form of the bound and for Poisson pps samples:
\begin{theorem} \label{concentration:lemma}
  For $\delta \leq 1$,
  \begin{eqnarray*}
    \lefteqn{\Pr[|\widehat{\Sum}(g;H)-\Sum(g;H)| > \delta \Sum(g;H)]} \\
    &\leq& 2\exp(-q^{(g)}(H) k\rho^{-2} \delta^{2}/3)\ .
   \end{eqnarray*}
  For $\delta > 1$,
  \begin{eqnarray*}
    \lefteqn{\Pr[\widehat{\Sum}(g;H)-\Sum(g;H) > \delta \Sum(g;H)]}\\
    &\leq& \exp(-q^{(g)}(H) k\rho^{-2} \delta/3)\ .
  \end{eqnarray*}
\end{theorem}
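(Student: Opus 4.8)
The plan is to exploit that under Poisson pps sampling the keys are included \emph{independently}, so that $\widehat{\Sum}(g;H) = \sum_{x\in H} X_x$ is a sum of independent nonnegative random variables, where $X_x = g_x/p^{(f,k)}_x$ if $x\in S$ and $X_x = 0$ otherwise. Under the support condition $g_x>0\Rightarrow f_x>0$ each has mean $\E[X_x] = g_x$, so the total mean is $\mu \equiv \Sum(g;H)$ and the estimate is a sum of independent bounded terms concentrated around $\mu$. This is precisely the regime of the multiplicative Chernoff--Hoeffding bound, and the whole argument reduces to (i) identifying the right bound $M$ on the summands and (ii) checking that $\mu/M$ is at least the exponent $q^{(g)}(H)\,k\,\rho^{-2}$ appearing in the statement.

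First I would isolate the random summands. A key with $p^{(f,k)}_x=1$ contributes the deterministic value $g_x$ to both $\widehat{\Sum}(g;H)$ and $\mu$, so it cancels in the deviation $\widehat{\Sum}(g;H)-\mu$ and cannot hurt concentration; I set it aside and work with the keys for which $p^{(f,k)}_x = k f_x/\Sum(f;\calX) < 1$. For such a key $X_x \in [0,\, g_x/p^{(f,k)}_x]$ with $g_x/p^{(f,k)}_x = \frac{\Sum(f;\calX)}{k}\frac{g_x}{f_x}$, so every random summand is bounded by $M \equiv \frac{\Sum(f;\calX)}{k}\max_x \frac{g_x}{f_x}$. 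Rescaling the random part by $M$ turns it into a sum of independent $[0,1]$ variables, to which the standard multiplicative bound applies with the mean replaced by a quantity at least $\mu/M$. The $\delta\le 1$ case yields the two-sided form (the factor $2$ coming from a union bound over the two tails) and $\delta>1$ the one-sided form, both with the constant $3$; and since the deviation threshold $\delta\mu$ uses the full mean $\mu$, which dominates the mean $\mu'$ of the random part, the deterministic contribution only shrinks the bound.

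The substantive step, which I expect to be the main obstacle, is showing $\mu/M \geq q^{(g)}(H)\,k\,\rho^{-2}$. Writing this out and cancelling the common positive factor $k\,\Sum(g;H)$, it is equivalent to $\Sum(g;\calX)\,\rho^2 \geq \Sum(f;\calX)\,\max_x(g_x/f_x)$. Here I would use the elementary bound $\Sum(g;\calX)=\sum_x \tfrac{g_x}{f_x} f_x \geq \bigl(\min_x \tfrac{g_x}{f_x}\bigr)\Sum(f;\calX) = \Sum(f;\calX)/\max_x (f_x/g_x)$ together with the definition $\rho = \rho(f,g)=\max_x(f_x/g_x)\,\max_x(g_x/f_x)$. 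Substituting, the required inequality collapses to $\max_x(f_x/g_x)\,\max_x(g_x/f_x) \geq 1$, i.e.\ exactly $\rho\geq 1$, which always holds. Combining this with the Chernoff bound of the previous step gives the two displayed inequalities; the only remaining care is the bookkeeping between the regimes $\delta\le 1$ and $\delta>1$ and the verification that the harmless deterministic terms never weaken the stated exponents.
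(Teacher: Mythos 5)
Your proposal follows essentially the same route as the paper's proof: set aside the keys with $p^{(f,k)}_x = 1$ whose contribution is deterministic, bound the random summands (range $\propto \Sum(f;\calX)/k$ times a worst-case ratio of $g$ to $f$), check that the normalized mean is at least $q^{(g)}(H)\,k\,\rho^{-2}$, and invoke the multiplicative Chernoff bound in the two regimes of $\delta$. If anything, your verification of $\mu/M \geq q^{(g)}(H)\,k\,\rho^{-2}$ is slightly more careful than the paper's: you keep $\max_x(g_x/f_x)$ explicit and reduce to the manifestly scale-invariant inequality $\rho^2 \geq \rho$, whereas the paper's two intermediate bounds (summand range $\leq \rho(f,g)\Sum(f)/k$ and $\Sum(g;H)\geq q^{(g)}(H)\Sum(f)/\rho(f,g)$) implicitly rely on a without-loss-of-generality rescaling of $g$, which is justified because the stated probability concerns only relative error.
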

\begin{proof}
  Consider Poisson pps sampling and the inverse probability estimator.
  The contribution of keys that are sampled with $p^{(f,k)}_x  =1$ is computed exactly.
  Let the contribution of these keys be $(1-\alpha)\Sum(g;H)$, for some $\alpha\in [0,1]$).  If $\alpha=0$, the estimate is the exact sum and we are done.
  Otherwise, it suffices to estimate the remaining $\alpha \Sum(g;H)$ with relative error   $\delta' = \delta/\alpha$. 

  Consider the remaining keys, which 
  have inclusion probabilities $p^{(f,k)}_x \geq k f_x/\Sum(f)$.
  The contribution of such a key $x$
  to the estimate is $0$ if $x$ is not sampled and is
  $$\frac{g_x}{p^{(f,k)}_x} \leq \rho(f,g)  \Sum(f)/k $$ when $x$ is sampled.
  Note that by definition $$\Sum(g;H) = q^{(g)}(H) \Sum(g) \geq q^{(g)}(H) \Sum(f)/\rho(f,g)\ .$$
  We apply the concentration bounds to the sum of random variables in the range $[0,  \rho(f,g)  \Sum(f)/k]$.  To use the standard form, we can normalize our random variables and to have range $[0,1]$ and accordingly normalize the expectation $\alpha \Sum(g;H)$ to obtain
  $$\mu = \alpha \frac{\Sum(g;H)}{\rho(f,g)  \Sum(f)/k]} \geq \alpha q^{(g)}(H) \rho(f,g)^{-2} k\ .$$
    We can now apply multiplicative Chernoff bounds for random variables in the range $[0,1]$ with $\delta' = \delta/\alpha$ and expectation $\mu$.
    The formula bounds the probability of relative error that exceeds $\delta$ by $2\exp(-\delta^2\mu/3$ when $\delta<1$ and by $\exp(-\delta\mu/3$ when $\delta>1$.
\end{proof}

\subsection{Computing the sample}
Consider data presented as streamed or distributed {\em elements} of the
form of key-value pairs $(x,f_x)$, where
$x\in \calX$ and  $f_x >0$.   
We define $f_x \equiv 0$ for keys $x$ that are not in the data.

An important property of our samples (bottom-$k$ or pps) is
that they are {\em composable} (mergeable). Meaning that a sample of the
union of two data sets can be computed from the samples of the data sets.
Composability facilitates efficient streamed or distributed  computation.
The sampling algorithms can use a 
random hash function applied to key $x$  to generate $u_x$ -- so 
seed values can be computed on the fly from $(x,f_x)$ and do not need
to be stored.

  With bottom-$k$ sampling we permit 
keys $x$ to occur in multiple elements, in which case we
define $f_x$ to be the maximum value of elements with key
$x$.  
The sample $S(\calD)$ of a set $\calD$ of elements contains the pair $(x,f_x)$
for the $k+1$ (unique) keys with smallest $f$-seeds \footnote{When keys are unique to
  elements it suffices to keep only the $(k+1)$st smallest $f$-seed
  without the pair $(x,f_x)$.}
 The sample of the union $\bigcup_i \calD_i$ 
is obtained from $\bigcup_i S(\calD_i)$ by first replacing multiple
occurrences of a key with the one with largest $f(w)$ and then 
returning the pairs for the $k+1$
keys with smallest $f$-seeds.

 With pps sampling, 
the information we store with our sample $S(\calD)$ includes the 
sum $\Sum(f;\calD) \equiv \sum_{x\in D}  f_x$ and the 
sampled pairs $(x,f_x)$, which are those with 
$u_x \leq k f_x/\Sum(f;\calD)$.   Because we need to accurately track the sum, we 
require that elements  have unique  keys. 
The sample of a union $D= \bigcup_i \calD_i$
is obtained using the sum $\Sum(f;\calD) = \sum_i \Sum(f;\calD_i)$, and retaining only keys in
$\bigcup_i S(\calD_i)$ that satisfy $u_x \leq k f_x/\Sum(f;\calD)$.

\section{Multi-objective pps samples} \label{multi:sec}

Our objectives are specified as pairs $(f,k_f)$ where $f \in F$ is a function and $k_f$ specifies a desired estimation quality
for $\Sum(f;H)$ statistics, stated in terms of 
the quality (Theorem~\ref{cvbound:thm} and
Theorem~\ref{concentration:lemma})
provided by a single-objective sample for $f$ with size parameter
$k_f$.  To simplify notation, we sometimes omit $k_f$ when clear from
context. 

A multi-objective sample $S^{(F)}$ \cite{multiw:VLDB2009}  is  defined by 
considering dedicated samples $S^{(f,k_f)}$ for each objective that
are {\em coordinated}.  The dedicated samples are coordinating by 
using the same randomization, which is the association of $u_x \sim U[0,1]$ with keys.
 The multi-objective sample $S^{(F)}=\bigcup_{f\in  F} S^{(f,k_f)}$ 
contains all keys that are included in at least one of the coordinated
dedicated samples.  In the remaining part of this section we study pps samples.  Multi-objective bottom-$k$ samples are studied in the next section.

\begin{lemma}
A multi-objective pps sample for $F$ includes each key 
$x$ independently  with probability
\begin{equation}\label{PPPSMOprob}
p^{(F)}_x = \min\{1, \max_{f\in F} \frac{k_f f_x}{\sum_y 
  f_y}\} \ .
\end{equation}
\end{lemma}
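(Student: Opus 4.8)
The plan is to unwind the definition of the multi-objective sample as a union of coordinated single-objective pps samples and reduce the membership event to a single threshold test on $u_x$. First I would recall that coordination means all dedicated samples $S^{(f,k_f)}$ are generated from one shared set of uniform variates: for each key $x$ we draw a single $u_x \sim U[0,1]$, and $x \in S^{(f,k_f)}$ exactly when $u_x \leq p^{(f,k_f)}_x$, with $p^{(f,k_f)}_x = \min\{1, k_f f_x / \sum_y f_y\}$ as in \eqref{ppsprob:eq}. Since $S^{(F)} = \bigcup_{f\in F} S^{(f,k_f)}$ contains $x$ precisely when $x$ belongs to at least one dedicated sample, the event $\{x \in S^{(F)}\}$ is equivalent to $\{\exists f\in F:\ u_x \leq p^{(f,k_f)}_x\}$, which is just $\{u_x \leq \max_{f\in F} p^{(f,k_f)}_x\}$.

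The only genuine computation is to identify this maximal threshold with $p^{(F)}_x$. Here I would use that $t \mapsto \min\{1,t\}$ is non-decreasing, so the unit clamp commutes with the maximum:
$$\max_{f\in F} \min\Big\{1,\ \frac{k_f f_x}{\sum_y f_y}\Big\} = \min\Big\{1,\ \max_{f\in F}\frac{k_f f_x}{\sum_y f_y}\Big\} = p^{(F)}_x\ .$$
Hence $x\in S^{(F)}$ if and only if $u_x \leq p^{(F)}_x$, and because $u_x$ is uniform on $[0,1]$ this event has probability exactly $p^{(F)}_x$, matching \eqref{PPPSMOprob}.

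Finally I would establish independence across keys: the membership indicator of each key $x$ is a deterministic function of its own variate $u_x$ alone, and the $u_x$ are drawn independently, so the inclusion events $\{x\in S^{(F)}\}$ are mutually independent with the stated marginals. I do not expect a real obstacle, since the statement is essentially definitional; the one point worth stating cleanly is the commutation of the unit clamp with the maximum (the same monotonicity argument covers a supremum over an infinite $F$), as this is precisely what collapses the per-objective clamped probabilities into the single clamped expression of \eqref{PPPSMOprob}.
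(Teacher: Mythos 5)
Your proposal is correct and follows essentially the same route as the paper's proof: reduce membership in the union to the single threshold test $u_x \leq \max_{f\in F} p_x^{(f,k_f)}$, commute the unit clamp with the maximum by monotonicity, and deduce independence from the fact that each key's indicator depends only on its own $u_x$. The only (welcome) addition is your explicit justification of the clamp/max interchange, which the paper states without comment.
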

\begin{proof}
Consider coordinated dedicated pps samples for $f\in F$ obtained using the 
same set $\{u_x\}$.  The key $x$ is included in at least one of the
samples if and only if
the value $u_x$ is at most 
 the maximum  over objectives $(f,k_f)$ of the pps inclusion 
 probability for that objective:
\begin{eqnarray*}
u_x &\leq& \max_{f\in F} p_x^{(f,k_f)} = \max_{f\in F} \min\{1,\frac{k_f f_x}{\sum_y 
  f_y}\} \\
&=&  \min\{1, \max_{f\in F} \frac{k_f f_x}{\sum_y 
  f_y}\}  \ .
\end{eqnarray*}
Since $u_x$ are independent, so are the inclusion probabilities of
different keys.
\end{proof}

\begin{example}
Consider the three objectives: $\Sum$, $\threshold_{10}$, and $\Cap_5$
all with $k=3$ as in 
Example~\ref{example2}.   The expected size of $S^{(F)}$ is $|S^{(F)}| =
\sum_x p^{(F)}_x = 4.68$. The naive solution of maintaining a separate
dedicated sample for each objective would have total expected size  $8.29$  (Note that the
dedicated expected sample size for $\Sum$ is $2.29$ and for $\threshold_{10}$, and $\Cap_5$ it is $3$.
\end{example}

 To estimate a statistics $\Sum(g;H)$ from $S^{(F)}$,  we 
apply the inverse probability estimator 
\begin{equation} \label{MOest}
\widehat{\Sum}(g;H) = \sum_{x\in S^{(F)} \cap H} \frac{g_x}{p_x^{(F)}}\ . 
\end{equation}
using the probabilities 
$p^{(F)}_x$ \eqref{PPPSMOprob}.

To compute the estimator \eqref{MOest}, we need to know $p^{(F)}_x$ when  $x\in S^{(F)}$.
These probabilities can be computed if we
 maintain the sums $\Sum(f)=\sum_{x} f_x$ for $f\in F$ as auxiliary information and we have $f_x$ available to us when $x\in S^{(f,k_f)}$.

 In some settings it is easier to obtain upper bounds
 $\pi_x \geq p^{(F)}_x$ on the multi-objective pps inclusion probabilities, compute a Poisson sample using $\pi_x$, and apply the respective inverse-probability estimator 
\begin{equation} \label{UBest}
\widehat{\Sum}(g;H) = \sum_{x\in S \cap H} \frac{g_x}{\pi_x}\ . 
\end{equation}
 
\smallskip
\noindent
{\bf Side note:}
It is sometime useful to use other sampling schemes, in particular,
VarOpt (dependent) 
sampling~\cite{Cha82,GandhiKPS:jacm06,varopt_full:CDKLT10} to obtain a
fixed sample size. The estimation quality bounds on the CV and concentration also hold with VarOpt (which has negative covariances).

\subsection{Multi-objective pps estimation quality}
We show that the estimation quality, in terms of the bounds on the
CV and concentration, of
the estimator \eqref{MOest} is at least as good as that of the estimate we obtain from the dedicated samples. To do so we prove a more general claim that holds
for any Poisson sampling scheme that includes each key $x$ in the sample $S$
with probability
$\pi_x \geq p^{(f,k_f)}$ and the respective inverse probability estimator
\eqref{UBest}.

The following lemma shows that estimate quality can only improve when inclusion probabilities increase:
\begin{lemma} \label{domprob:lemma}
  The variance $\var[\widehat{\Sum}(g;H)]$ of \eqref{UBest} and hence $\CV[\widehat{\Sum}(g;H)]$ are non-increasing in $\pi_x$.
\end{lemma}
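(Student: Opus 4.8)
The plan is to exploit the independence inherent in Poisson sampling to write the variance as a sum of per-key contributions, each of which is manifestly monotone in its own inclusion probability. First I would fix the segment $H$ and the function $g$ and observe that under a Poisson scheme the estimator \eqref{UBest} is a sum of independent random variables $Y_x$, one for each key $x\in H$, where $Y_x = g_x/\pi_x$ if $x\in S$ and $Y_x=0$ otherwise. Because distinct keys are sampled independently, variance is additive, so $\var[\widehat{\Sum}(g;H)] = \sum_{x\in H}\var[Y_x]$.

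Next I would compute each term explicitly. Since $\E[Y_x] = (g_x/\pi_x)\pi_x = g_x$ and $\E[Y_x^2] = (g_x/\pi_x)^2 \pi_x = g_x^2/\pi_x$, this is just a Bernoulli second-moment calculation giving
$$\var[\widehat{\Sum}(g;H)] = \sum_{x\in H} g_x^2\left(\frac{1}{\pi_x}-1\right)\ .$$
Each summand depends only on its own $\pi_x$, and since $1/\pi_x$ is decreasing on $(0,1]$ and $g_x^2\geq 0$, each term is non-increasing in $\pi_x$. Hence the whole sum is non-increasing in every $\pi_x$, which establishes the variance part of the claim.

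For the coefficient of variation the key point is that the estimator is unbiased with mean $\E[\widehat{\Sum}(g;H)] = \sum_{x\in H}(g_x/\pi_x)\pi_x = \Sum(g;H)$, a quantity that does not depend on the $\pi_x$ at all. Therefore $\CV[\widehat{\Sum}(g;H)] = \sqrt{\var[\widehat{\Sum}(g;H)]}/\Sum(g;H)$ is simply a fixed positive scaling of the square root of the variance, and it inherits the monotonicity: it too is non-increasing in each $\pi_x$.

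There is no substantive obstacle here; the only point requiring care is ensuring the estimator is well defined and unbiased, which needs $\pi_x>0$ whenever $g_x>0$. This is guaranteed by the hypothesis $\pi_x \geq p^{(f,k_f)}_x$ together with the standing assumption $g_x>0\implies f_x>0$, so that every contributing key has positive inclusion probability. Everything else reduces to the additivity of variance under independence and the routine Bernoulli variance computation above.
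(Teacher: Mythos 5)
Your proof is correct and follows essentially the same route as the paper's: decompose the estimator into per-key inverse-probability terms, compute the Bernoulli variance $g_x^2(1/\pi_x - 1)$ for each, use independence to add them, and note that unbiasedness makes the mean independent of the $\pi_x$ so the CV inherits the monotonicity. The only difference is cosmetic: the paper phrases the additivity step so it also covers schemes with nonpositive covariances (e.g.\ VarOpt), while you specialize to the independent Poisson case that the lemma's setting actually assumes.
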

\begin{proof}
For each key $x$ consider the inverse probability estimator
$\hat{g}_x = g_x/\pi_x$ when $x$ is sampled and $\hat{g}_x=0$
otherwise.  Note that
$\var[\hat{g}_x] = g_x^2(1/\pi_x -1)$, which is decreasing with $\pi_x$.
We have
$\widehat{\Sum}(g;H)]=\sum_{x\in H} \hat{g}_x$.
When covariances between $\hat{g}_x$ are nonpositive, which is the
case in particular with independet inclusions, we have
$\var[\widehat{\Sum}(g;H)]= \sum_{x\in H} \var[\hat{g}_x]$ and the
claim follows as it applies to each summand.
\end{proof}

We next consider concentration of the estimates.
\begin{lemma}
  The concentration claim of Theorem~\ref{concentration:lemma} carries over when we use the inverse probability estimator with any sampling probabilities that satisfy $\pi_x \geq p^{(f,k)}_x$ for all $x$.
\end{lemma}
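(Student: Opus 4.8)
The plan is to revisit the proof of Theorem~\ref{concentration:lemma} and observe that it never actually used the exact value $\pi_x = p^{(f,k)}_x$; it only used the lower bound $p^{(f,k)}_x \geq k f_x/\Sum(f)$ to cap the per-key contribution. So the strategy is to run essentially the same argument with $\pi_x$ in place of $p^{(f,k)}_x$ and show that the hypothesis $\pi_x \geq p^{(f,k)}_x$ preserves every inequality that mattered. First I would split the keys as before into those sampled with $\pi_x = 1$, whose contribution is exact, and the remainder, and reduce to estimating the residual mass $\alpha\Sum(g;H)$ with relative error $\delta' = \delta/\alpha$.

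Next I would bound the per-key contribution. The key observation is that $\pi_x \geq p^{(f,k)}_x \geq k f_x/\Sum(f)$, so the same chain of inequalities holds: when a remaining key $x$ is sampled, its contribution is
\begin{equation*}
\frac{g_x}{\pi_x} \leq \frac{g_x}{p^{(f,k)}_x} \leq \rho(f,g)\,\Sum(f)/k\ ,
\end{equation*}
since dividing by a larger $\pi_x$ only shrinks the contribution. Thus the random variables whose sum we are estimating still lie in the range $[0,\rho(f,g)\Sum(f)/k]$, and the normalization to $[0,1]$ goes through verbatim. The lower bound on the expectation $\mu \geq \alpha q^{(g)}(H)\rho(f,g)^{-2} k$ also carries over because it rests on the inequality $\Sum(g;H) \geq q^{(g)}(H)\Sum(f)/\rho(f,g)$ (which is independent of the sampling probabilities) together with the same range bound. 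I would then invoke the multiplicative Chernoff bound for $[0,1]$-valued independent random variables exactly as in the original proof to recover both the $\delta \leq 1$ and $\delta > 1$ tail bounds.

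The one point requiring care is that the expectation $\mu$ appearing in the Chernoff bound is the true expected normalized sum under $\pi_x$, not under $p^{(f,k)}_x$, and since $\pi_x$ is larger, a sampled key contributes a smaller normalized value, so the true $\mu$ is in fact smaller. I must therefore confirm that the Chernoff tail bound is monotone in the right direction: the statement bounds the probability of relative error exceeding $\delta$, and for a fixed range the bound $\exp(-\delta^2\mu/3)$ \emph{increases} as $\mu$ decreases. So I cannot simply plug in a smaller $\mu$; instead I use the established lower bound $\mu \geq \alpha q^{(g)}(H)\rho^{-2} k$, which continues to hold (the displayed inequality above only used the range bound and $\Sum(g;H) \geq q^{(g)}(H)\Sum(f)/\rho$, both valid for any $\pi_x \geq p^{(f,k)}_x$), and the stated tail bound follows since it is derived from this lower bound on $\mu$. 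The main obstacle is thus purely bookkeeping: checking that the reduction and the direction of monotonicity are consistent, rather than any new probabilistic input. Finally, I would note the proof requires $\pi_x \geq p^{(f,k)}_x$ and independence (or more generally nonpositive covariances, as in Lemma~\ref{domprob:lemma}), so the same remark about VarOpt applies.
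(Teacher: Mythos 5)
Your proof is correct and takes essentially the same route as the paper's own (much terser) argument: the only role the sampling probabilities play in the proof of Theorem~\ref{concentration:lemma} is to bound the range of the per-key contributions, and raising $\pi_x$ only shrinks that range, so the Chernoff bound applies verbatim. One small quibble: the normalized expectation $\mu$ is actually \emph{unchanged} (not smaller) when $\pi_x$ increases, since the inverse-probability estimator stays unbiased so each key still contributes $g_x$ in expectation; but your fallback to the lower bound $\mu \geq \alpha q^{(g)}(H)\rho^{-2}k$ is the right move and the argument goes through regardless.
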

\begin{proof}
  The generalization of the proof is immediate, as the range of the random variables $\hat{f}_x$ can only decrease when we increase the inclusion probability.
\end{proof}

\subsection{Uniform guarantees across objectives}
 An important special case is when $k_f = k$ for all $f\in F$, that is, we
 seek uniform statistical guarantees for all our objectives.
We use the notation $S^{(F,k)}$ for the respective 
multi-objective sample. 
We can write the multi-objective pps probabilities \eqref{PPPSMOprob}
 as
\begin{equation} \label{uPPSMO}
 p_x^{(F,k)} = \min\{1,k \max_{f\in F} \frac{f_x}{\sum_y 
   f_y}\}= \min\{1,k p^{(F,1)}_x\} .
\end{equation}
The last equality follows when recalling the definitions of
$p^{(f,1)}_x = f_x/\sum_y f_y$ and hence
$$p^{(F,1)}_x = \max_{f\in F} p^{(f,1)}_x = \max_{f\in F} \frac{f_x}{\sum_y 
  f_y}\ . $$

We refer to $p_x^{(f,1)}$ as the {\em base pps} probabilities for $f$.
Note that the base pps probabilities are rescaling of $f$ and pps probabilities are invariant to this scaling.
We refer to $p_x^{(F,1)}$ as the
{\em multi-objective base pps} probabilities for $F$.
Finally, for a reason that will soon be clear, we refer to 
the sum $$h^{(\text{pps})}(F) \equiv \sum_x p^{(F,1)}_x $$ as
the multi-objective pps {\em overhead} of $F$.

It is easy to see that 
$h(F)\in [1,|F|]$  and $h(F)$ is closer to $1$ when all objectives in $F$ are more similar.
We can write
$$ k p^{(F,1)}_x = (k h(F)) \frac{p^{(F,1)}_x}{\sum_x p^{(F,1)}_x}\ .$$
That is, the multi-objective pps probabilities \eqref{uPPSMO} are equivalent to
single-objective pps probabilities with size parameter
$k h(F)$ computed with respect to base probability ``weights''
$g_x = p^{(F,1)}_x$.

\smallskip
\noindent
{\bf Side note: }
We can apply any single-objective weighted sampling
scheme such as VarOpt or bottom-$k$ to the weights $p^{(F,1)}_x$, or
to upper-bounds $\pi_x \geq p^{(F,1)}_x$ on these weights while  adjusting
the sample size parameter to $k \sum_x \pi_x$.

\ignore{
  A simple with-replacement sampling scheme samples each key
  independently $k$ times with probability 
$p^{(F,1)}_x$.  Note that the expected total number of 
  samples is $k H(F)$.

Our estimate $\widehat{\Sum}(f;H)$ is then $1/k$ times the sum over sampled keys in $H$ of the number of times a key $x$ was sampled
multiplied by $f_x/p^{(F,1)}_x$.  

 We can also treat the pairs $(x,p^{(F,1)}_x)$ as key value pairs to
 which we apply a weighted sampling scheme.   The sample size that we
 will need will be at most $k'=k H(F)$.
 \begin{lemma} 
The pps probabilities of the pairs $(x,p^{(F,1)}_x)$ with parameter
$k'=k H(F)$ are greater or equal to the multi-objective PPS probabilities of
$(f,k)$.
\end{lemma}
\begin{proof}
$$p^{(F,k)}_x = \min\{1, k
p^{(F,1)}_x\}\ $$ 
\end{proof}

 but an alternative scheme samples
each key independently $k$ times 
according to $p^{(F,1)}_x$, with each positive sample adding
  $g(w_x)/p^{(F,1)}_x$ to the estimate.
}

\subsection{Lower bound on multi-objective sample size} \label{lowerbound:sec}

The following theorem shows that any multi-objective sample for $F$ that meets the quality guarantees on all domain queries for $(f,k_f)$ must include each key $x$ with probability at least $\frac{p^{(f,1)}_x k_f}{p^{(f,1)}_x k_f+1} \geq \frac{1}{2} p^{(f,k_f)}_x$.  Moreover, when $p^{(f,1)}_x k_f \ll 1$, then the lower bound on inclusion probability is close to $p^{(f,k_f)}_x$.
This implies the multi-objective pps sample size is necessary to meet
the quality guarantees we seek (Theorem~\ref{cvbound:thm}).

\begin{theorem} \label{probbound}
  Consider a sampling scheme that for weights $f$ supports estimators that
satisfy for each segment $H$
$$\CV[\widehat{\Sum}(f;H)] \leq
1/\sqrt{q^{(f)}(H)k}\ .$$
Then the inclusion probability of a key $x$ must be at least
$$p_x \geq \frac{p^{(f,1)} k}{p^{(f,1)} k+1} \geq \frac{1}{2} \min\{1, p^{(f,1)} k\}\ .$$
\end{theorem}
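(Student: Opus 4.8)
The plan is to apply the hypothesized CV guarantee to the most demanding possible segment, namely a singleton $H=\{x\}$. For such a segment the relative $f$-weight is exactly the base pps probability,
$$q^{(f)}(\{x\}) = \frac{f_x}{\sum_y f_y} = p^{(f,1)}_x\ ,$$
so the hypothesis forces $\CV[\widehat{\Sum}(f;\{x\})] \leq 1/\sqrt{p^{(f,1)}_x k}$. The strategy is then to lower bound this CV purely in terms of the inclusion probability $p_x$ and to solve the resulting inequality for $p_x$.

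First I would make the locality of the estimator explicit. For the singleton $\{x\}$ the nonnegative unbiased inverse-probability estimator reduces to a single per-key term $\hat f_x := \widehat{\Sum}(f;\{x\})$, which vanishes whenever $x$ is not sampled. Writing $Z_x$ for the indicator that $x$ is in the sample (so $\Pr[Z_x=1]=p_x$) and using $\hat f_x = 0$ on $\{Z_x=0\}$, unbiasedness gives $\E[\hat f_x \mid Z_x=1] = f_x/p_x$. I would then bound the second moment from below by conditioning on inclusion and applying Jensen's inequality,
$$\E[\hat f_x^2] = p_x\, \E[\hat f_x^2 \mid Z_x=1] \geq p_x\,\bigl(\E[\hat f_x \mid Z_x=1]\bigr)^2 = \frac{f_x^2}{p_x}\ ,$$
so that $\var[\hat f_x] \geq f_x^2(1-p_x)/p_x$ and hence $\CV[\widehat{\Sum}(f;\{x\})] \geq \sqrt{(1-p_x)/p_x}$. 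This is tight for the canonical estimator $\hat f_x = f_x/p_x$.

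Combining this lower bound with the hypothesized upper bound and squaring yields $(1-p_x)/p_x \leq 1/(p^{(f,1)}_x k)$. Writing $a := p^{(f,1)}_x k$ and rearranging this linear inequality in $p_x$ gives $a \leq p_x(1+a)$, i.e. $p_x \geq a/(1+a) = \frac{p^{(f,1)}_x k}{p^{(f,1)}_x k + 1}$, which is the claimed bound. The final inequality $a/(1+a) \geq \tfrac{1}{2}\min\{1,a\}$ I would verify by the two cases $a\le 1$ (where $1+a\le 2$) and $a>1$ (where $a/(1+a)>\tfrac{1}{2}$).

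The main obstacle is conceptual rather than computational: justifying that the singleton estimator must vanish when $x$ is unsampled. This is precisely the locality property of Horvitz--Thompson--type estimators that is implicit in the segment-sum framework, since the estimate of $\{x\}$ cannot reconstruct the arbitrary value $f_x$ from keys other than $x$. Once this is granted, the conditional-variance lower bound via Jensen and the subsequent algebra are routine.
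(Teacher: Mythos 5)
Your proof is correct and takes essentially the same route as the paper's: restrict to the singleton segment $H=\{x\}$, where $q^{(f)}(\{x\})=p^{(f,1)}_x$, argue that any nonnegative unbiased estimator must vanish when $x$ is unsampled, and solve the resulting inequality $\sqrt{(1-p_x)/p_x}\leq 1/\sqrt{p^{(f,1)}_x k}$ for $p_x$. The only difference is cosmetic: where the paper simply asserts that the Horvitz--Thompson estimator (constant value conditioned on inclusion) minimizes the variance, you derive that lower bound on the CV via conditional Jensen, which is a slightly more explicit justification of the same step.
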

\begin{proof}
Consider a segment of a single key $H=\{x\}$.  Then $p^{(f,1)} \equiv q^{(f)}(H) 
=f_x/\sum_y f_y \equiv q$. 
The best nonnegative unbiased sum estimator is the HT estimator: When the key is not sampled, there is no evidence of the segment and the estimate must be $0$.
When it is, uniform estimate minimize the variance.  
If the key is included with probability $p$, the CV of the estimate is 
$$\CV[\widehat{\Sum}(f;\{x\})]=(1/p-1)^{0.5}\ .$$  From the requirement
$(1/p-1)^{0.5} \leq 1/(q k)^{0.5}$, we obtain that 
$p\geq \frac{qk}{qk+1}$. 
\end{proof}

\section{Multi-objective bottom-$k$ samples} \label{bottomkMO:sec}
The  sample $S^{(F)}$
is defined with respect to random $\{u_x\}$.  Each dedicated sample
$S^{(f,k_f)}$ includes the $k_f$ lowest $f$-seeds, computed using
$\{u_x\}$.   $S^{(F)}$ accordingly includes all 
keys that have one of the $k_f$ lowest $f$-seeds for at least one
$f\in F$.

To estimate statistics $\Sum(g;H)$ from bottom-$k$ $S^{(F)}$,  we again 
apply the inverse probability estimator \eqref{MOest} but here we use 
the {\em conditional} inclusion probability $p^{(F)}_x$ for each key $x$
 \cite{multiw:VLDB2009}.  This is 
the probability (over $u_x \sim 
  U[0,1]$) that $x \in S^{(F)}$, when fixing $u_y$ for 
  all $y\not= x$ to be as in the current sample. Note that 
$$p^{(F)}_x = \max_{f\in F} p_{x}^{(f)}\ ,$$ where $p_x^{(f)}$ are as 
  defined in \eqref{inclusiont}.

In order to compute the probabilities $p_x^{(F)}$
for $x\in S^{(F)}$, it always suffices to maintain the slightly larger sample
$\bigcup_{f\in F} S^{(f,k_f+1)}$.  For completeness, we show that it
suffices to instead maintain 
with $S^{(F)} \equiv \bigcup_{f\in F} S^{(f,k_f)}$ a smaller (possibly empty)  
set $Z \subset \bigcup_{f\in F} S^{(f,k_f+1)} \setminus S^{(F)}$ of auxiliary keys.
We now define the set $Z$ and show how inclusion probabilities  can be
computed from $S^{(F)}\cup Z$.
For a key $x\in S^{(F)}$, we denote by
$$g^{(x)} = \arg\max_{f\in F \mid 
  x\in S^{(f)}} p^{(f)}_x$$ the objective with the most forgiving 
threshold for $x$.  If $p^{(g^{(x)})}_x <1$, let $y_x$ be the key with $(k+1)$
smallest $g$-seed (otherwise $y_x$ is not defined).  
The auxiliary keys are then $Z= \{y_x 
\mid x\in S^{(F)}\} \setminus S^{(F)}$.  
We use the sample and auxiliary keys $S^{(F)}\cup Z$  as follows 
to compute the inclusion probabilities:
We first compute for each $f\in F$, $\tau'_f$, which is the
$k_f+1$ smallest $f$-seed of keys in $S^{(F)}\cup Z$.  For each $x\in S^{(F)}$, we then use
$p^{(F)}_x = \max_{f\in F} f(w_x) \tau'_f$ (for priority) or
$p^{(F)}_x = 1- \exp(-\max_{f\in F} f(w_x) \tau'_f)$ (for ppswor).
To see that $p^{(F)}_x$ are correctly computed, note that while
we can have $\tau'_f > \tau^{(f,k_f)}$ for
some $f\in F$ ($Z$ may not include the threshold keys of all the dedicated
samples $S^{(f,k_f)}$),  our definition of $Z$  ensures that $\tau'_f = \tau^{(f,k_f)}$
for $f$ such that
there is at least one $x$ where $f=g^{(x)}$ and $p^{(g^{(x)})}_x < 1$.

\smallskip
\noindent
{\bf Composability:}
Note that multi-objective samples $S^{(F)}$ are composable, since they are a union of (composable) single-objective samples $S^{(f)}$.
It is not hard to see that composability applies with the auxiliary keys:
The set of auxiliary keys in the composed sample must be a subset of sampled and
auxiliary keys in the components.
Therefore, the sample itself includes all the necessary state for
streaming or distributed computation.

\smallskip
\noindent
{\bf Estimate quality:}
 We can verify that for any $f\in F$ and $x$, for any
random assignment $\{u_y\}$ for $y\not=x $,   we have $p^{(F)}_x \geq
p^{(f)}_x$. Therefore (applying Lemma~\ref{domprob:lemma} and noting
zero covariances\cite{bottomk:VLDB2008})  the variance and the CV are at most that of
the estimator \eqref{ppsest} applied to the bottom-$k_f$ sample 
$S^{(f)}$.    To summarize, we obtain the following statistical guarantees on
 estimate quality with multi-objective samples:
\begin{theorem}  \label{MOest:thm}
For each $H$ and $g$, the inverse-probability estimator applied to a
multi-objective pps sample $S^{(F)}$ has
$$\CV[\widehat{\Sum}(g;H)] \leq \min_{f\in F}
\sqrt{\frac{\rho(f,g)}{q^{(g)}(H) k_f}}\ .$$
The estimator applied  to a multi-objective bottom-$k$ samples has the same
guarantee but with $(k_f-1)$ replacing $k_f$.
\end{theorem}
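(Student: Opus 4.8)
The plan is to reduce the multi-objective bound, for each fixed $f\in F$ separately, to the single-objective guarantee of Theorem~\ref{cvbound:thm}, and then take the minimum over $f$. The engine driving the reduction is the monotonicity of the inverse-probability estimator's variance in the inclusion probabilities (Lemma~\ref{domprob:lemma}), combined with the observation that the multi-objective inclusion probability of every key dominates its dedicated single-objective counterpart. Concretely, I would fix an arbitrary $f\in F$ and show $p^{(F)}_x \ge p^{(f,k_f)}_x$ for all $x$; since both the sampling and the estimator \eqref{MOest} use the same probabilities $p^{(F)}_x$, Lemma~\ref{domprob:lemma} then certifies that the CV obtained from $S^{(F)}$ is no larger than the CV obtained from the dedicated sample $S^{(f,k_f)}$, which in turn is bounded by $\sqrt{\rho(f,g)/(q^{(g)}(H)k_f)}$. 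As $f\in F$ is arbitrary, the claimed minimum follows.

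For the pps case the domination is immediate from \eqref{PPPSMOprob}, since the inner maximum $\max_{f'\in F} \frac{k_{f'} f'_x}{\sum_y f'_y}$ is at least the single $f$-term, giving $p^{(F)}_x \ge p^{(f,k_f)}_x$. Moreover, multi-objective pps inclusions are independent across keys (as established when deriving \eqref{PPPSMOprob}), so the covariances required to be nonpositive in Lemma~\ref{domprob:lemma} in fact vanish and the lemma applies verbatim. Feeding the single-objective bound of Theorem~\ref{cvbound:thm} through this comparison yields the first line of the theorem.

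For bottom-$k$ the same skeleton works, but two points need more care. First, the relevant inclusion probability $p^{(F)}_x=\max_{f'\in F}p^{(f')}_x$ is a \emph{conditional} probability given the randomization $\{u_y\}_{y\ne x}$, so I would verify that $p^{(F)}_x\ge p^{(f)}_x$ holds pointwise for every fixed realization of $\{u_y\}_{y\ne x}$ (clear, since the maximum includes the $f$-term, with $p^{(f)}_x$ as in \eqref{inclusiont}). Second, Lemma~\ref{domprob:lemma} requires nonpositive covariances, and for bottom-$k$ samples with the conditional-probability estimator this is a nontrivial but established fact, namely the zero-covariance property of \cite{bottomk:VLDB2008}; invoking it lets the lemma go through exactly as before. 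The only substantive change relative to pps is that the single-objective bottom-$k$ bound in Theorem~\ref{cvbound:thm} carries $k_f-1$ in place of $k_f$, which propagates directly into the stated guarantee.

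I expect the main obstacle to be precisely the bottom-$k$ covariance issue: unlike Poisson pps, where cross-key independence makes Lemma~\ref{domprob:lemma} transparent, the bottom-$k$ estimator's nonpositive covariances are what legitimize its use, and the domination argument must be phrased at the level of conditional inclusion probabilities rather than marginal ones. Once that is granted, the remainder is a routine chain $\CV_{S^{(F)}}\le \CV_{S^{(f,k_f)}}\le \sqrt{\rho(f,g)/(q^{(g)}(H)k_f)}$ followed by minimizing over $f\in F$.
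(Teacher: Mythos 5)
Your proposal is correct and follows essentially the same route as the paper: the paper likewise observes that $p^{(F)}_x \ge p^{(f)}_x$ pointwise (conditionally on $\{u_y\}_{y\ne x}$ in the bottom-$k$ case), invokes Lemma~\ref{domprob:lemma} together with the zero-covariance property of \cite{bottomk:VLDB2008}, and then inherits the single-objective bound of Theorem~\ref{cvbound:thm} before minimizing over $f\in F$. The extra care you flag for the bottom-$k$ covariances and conditional probabilities is exactly the point the paper handles the same way.
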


\ignore{
The MO sample $S_F$ can be viewed as a union of weighted samples $S_f$ for 
$f(w_x)$ for all $f\in F$.  We would like to make samples $S_f$ as 
similar as possible  \cite{multiw:VLDB2009} which is achieved by 
{\em coordinating} the samples, that is, using the 
same ``randomization.'' 
   Coordination was initially motivated by survey sampling applications 
\cite{KishScott1971,BrEaJo:1972,Saavedra:1995,Rosen1997a,Ohlsson:2000}
(the term Permanent Random Numbers (PRN) was used) but was later used 
extensively by computer scientists. 
}
\ignore{
\begin{algorithm2e}[h] 
\caption{MO sampling for objectives $F$  \label{MOF:alg}}
\SetKwData{Samp}{S}
\KwIn{Stream of key value pairs $\{(x,w)\}$}
\KwOut{MO sample $S$, contains all keys $x$ such that with respect to
  at least one $f\in F$, $r_x/f(w_x)$ is in the bottom-$(k+1)$ values,
  where $w_x=0$ for inactive keys and otherwise is 
  the maximum $w$ over input pairs with key $x$.}
For $f\in F$, the threshold $\tau_f$ is defined as the $(k+1)$th
smallest value in $r_x/f(w_x)$ for $x\in S$.  It can be computed from
the current sample or explicitly maintained.
\ForEach{stream element $(x,w)$}
{
  \If{$x\in \Samp$ and $w_x < w$}{$w_x \gets w$\; For all $f$ for which $x$
    is the threshold key, update the threshold key}
  \Else{
    \If{$\exists f\in F,\ \frac{r_x}{f(w)} < \tau_f$}
    {
      $Y \gets \emptyset$\tcp*{Set of removal candidates}\;
      \ForEach{$f\in F$ such that $\frac{r_x}{f(w)} < \tau_f$}
      {
        Place current threshold key $y$  for $f$ in $Y$.  
      }
      $S \gets S \cup \{(x,w)\}$\;
      Update $\tau_f$ in all $f$ where $x$ was below threshold (if
      explicitly maintained)\;
      \ForEach{$y\in Y$}{
        \If{ $\forall f\in F,\ \frac{r_y}{f(w_y)} > \tau_f$}{$\Samp \gets
          \Samp \setminus \{y\}$}
      }
  
    }
  }
}
\Return{$\Samp,\tau$}
\end{algorithm2e}
}

\smallskip
\noindent
{\bf Sample size overhead:}
We must have 
 $\E[|S^{(F)}|] \leq \sum_{f\in F} k_f$. 
The worst-case, where the size of $S^{(F)}$ is 
the sum of the sizes of the dedicated 
 samples, materializes when functions 
 $f\in F$ have disjoint supports.
The sample size,  however, can be much 
 smaller when functions are more related.

 With uniform guarantees ($k_f \equiv k$), we define
 the 
  {\em multi-objective bottom-$k$ overhead} to be 
  $$h^{(\text{botk})}(F) \equiv \E[|S^{(F)}|]/k\ .$$
  This is the sample size overhead of a
  multi-objective versus a dedicated bottom-$k$ sample.

\smallskip
\noindent
{\bf pps versus bottom-$k$ multi-objective sample size:}
For some sets $F$, with the same parameter $k$, we can have much
  larger multi-objective overhead with bottom-$k$ than with pps.
   A multi-objective pps samples is the smallest sample that can
   include a pps sample for each $f$. 
 A multi-objective bottom-$k$ sample must include a bottom-$k_f$ sample for each $f$. 
  Consider a set of $n> k$ keys.  For each subset of $n/2$ keys we define
  a function $f$ that is uniform on the subset and $0$ elsewhere.
  It is easy to see that in this case $h^{(\text{pps})}(F) = 2$ whereas
  $h^{(\text{botk})}(F) \geq (n/2+k)/k$

\smallskip
\noindent
{\bf Computation:}
When the data has the form of
elements $(x,w_x)$ with unique keys and $f_x = f(w_x)$ for a set of functions $F$, then short of further structural assumptions on $F$, the sampling algorithm that computes $S^{(F)}$ must apply 
all functions $f\in F$ to all elements.  The computation is thus
$\Omega(|F|n)$ and can be
$O(|F|n + |S^{(F)}|\log k)$ time by identifying for each $f\in F$,
the $k$ keys with smallest $f$-$\seed{x}$.
In the sequel we will see examples of large or infinite sets $F$ but with special structure that allows us to efficiently compute a multi-objective sample.

\ignore{
In some cases $F$ is very large, but
we can exploit the structure of $F$ for efficient computation of the multi-objective sample. This is the case when $F$ is the set of all monotone functions of the weights $w_x$ (see Section~\ref{monotone:sec}).
Sometimes, as is the case for metric objectives (see Section~\ref{metric:sec}),
we are able to compute efficiently an ``upper bound'' multi-objective sample that is guaranteed to ``cover'' all functions $f\in F$ but may have slack that adds to the overhead.
}


\ignore{
The computation of composing two samples or processing an element
when streaming can be reduced if we maintain explicitly
the threshold values $\tau^{(f,k_f)}$.  That way, we can test in $O(|F|)$ time if a new element needs
to be added to the sample, and if so, identify the functions $F'\subset F$
for which the threshold would decrease.  When $|F'|\ll |F|$, we can then update the
thresholds using $O(|F'| |S^{(F)}| \log k)$ computation  and perform
periodic cleanups of $O(|F| |S^{(F)}| \log k)$ to remove redundant
  keys which have $f$-$\seed{x} > \tau^{(f,k_f)}$ for all $f\in F$.
}

\ignore{
  An important property of bottom-$k$ sampling, which facilitates 
  their computation on streamed or distributed data, is mergeability 
  \cite{ECohen6f}. 
Mergeability means that for two sets of keys $X$ and $Y$, a sample of 
$X\cup Y$  can be computed from the samples of $X$ and $Y$. 
Recall that in our aggregated setting, we assume that either keys 
occur once or that the same weight $w_x$ is provided with each 
occurrence of each key.   Since multi-objectives samples are a union 
of bottom-$k$ samples, they inherit their mergeability property:
\begin{lemma}
Multi-objective samples are mergeable 
\end{lemma}
}

\ignore{
  In the sequel, we restrict our attention to monotone frequency 
  functions. 
When $F$ includes only monotone functions, we can allow a key $x$ to  
occur multiple times with different weights in our stream (or 
distributed data), when we have the interpretation that $w_x$ is the 
maximum weight over all these occurrences. 

  In this model, the samples $S_f$ of different data sets (with 
  possibly overlapping sets of of keys)  are still mergeable. 
Again, this follows from the mergeability under this condition of each dedicated 
bottom-$k$ sample. 
}

\section{The Sampling Closure}  \label{basis:sec}
We define the {\em sampling closure} $\overline{F}$ of a set of
functions $F$ to 
be the set of all functions $f$ such that for all $k$ and for all $H$,
the estimate of  $\Sum(f;H)$ from  $S^{(F,k)}$  has the CV bound 
of Theorem~\ref{cvbound:thm}.
Note that this definitions is with respect to uniform guarantees (same
size parameter $k$ for all objectives).
 We show that the closure of $F$ contains all
non-negative linear combinations of  functions from $F$.
\begin{theorem} \label{closure:thm}
Any $f = \sum_{g\in F}  \alpha_g g$ where $\alpha_g\geq 0$ is in $\overline{F}$.
\end{theorem}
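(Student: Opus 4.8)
The plan is to reduce the closure claim to a pointwise comparison of inclusion probabilities and then invoke the monotonicity of variance in the inclusion probability from Lemma~\ref{domprob:lemma}. The crux is to show that for $f = \sum_{g\in F}\alpha_g g$ with $\alpha_g\geq 0$, the multi-objective base probabilities dominate the dedicated base probabilities of $f$, i.e.\ $p^{(F,1)}_x \geq p^{(f,1)}_x$ for every key $x$.

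To establish this domination, I would write $S_g \equiv \sum_y g_y$ and observe that the dedicated base probability of $f$ is a \emph{convex combination} of the dedicated base probabilities of the functions $g\in F$:
$$p^{(f,1)}_x = \frac{f_x}{\sum_y f_y} = \frac{\sum_g \alpha_g g_x}{\sum_g \alpha_g S_g} = \sum_g \frac{\alpha_g S_g}{\sum_{g'}\alpha_{g'}S_{g'}}\cdot\frac{g_x}{S_g}\ ,$$
using $\sum_y f_y = \sum_g \alpha_g S_g$. This exhibits $p^{(f,1)}_x$ as a weighted average of the values $p^{(g,1)}_x = g_x/S_g$ with nonnegative weights $\alpha_g S_g$, and a weighted average never exceeds the maximum of the averaged quantities, so $p^{(f,1)}_x \leq \max_{g\in F} p^{(g,1)}_x = p^{(F,1)}_x$.

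Since $\min\{1,kt\}$ is nondecreasing in $t$, this base-probability domination lifts to the pps inclusion probabilities for every $k$: $p^{(F,k)}_x = \min\{1,k\,p^{(F,1)}_x\} \geq \min\{1,k\,p^{(f,1)}_x\} = p^{(f,k)}_x$. The inverse-probability estimator of $\Sum(f;H)$ from $S^{(F,k)}$ therefore uses inclusion probabilities that pointwise dominate those of the dedicated pps sample $S^{(f,k)}$. Applying Lemma~\ref{domprob:lemma} (the variance, and hence the CV, is non-increasing in the $\pi_x$, using that Poisson inclusions have zero covariance), the CV of the estimate from the multi-objective sample is at most that of the dedicated estimate. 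The dedicated sample meets the bound of Theorem~\ref{cvbound:thm} with $\rho(f,f)=1$, namely $\CV[\widehat{\Sum}(f;H)]\leq 1/\sqrt{q^{(f)}(H)k}$, so the multi-objective estimate meets it too, which is precisely the defining condition for $f\in\overline{F}$.

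I do not expect a serious obstacle: the only substantive step is recognizing the convex-combination structure of $p^{(f,1)}_x$, after which everything is monotonicity bookkeeping, carried out uniformly in $k$ as the closure definition requires. The lone point to handle is the degenerate case $\sum_g \alpha_g S_g = 0$, where $f\equiv 0$ and the claim holds vacuously; otherwise the weighted average is well defined and the argument above goes through.
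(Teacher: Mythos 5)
Your pps argument is correct and, at its core, follows the same strategy as the paper's own proof: reduce the claim to pointwise domination of inclusion probabilities, $p^{(f,k)}_x \leq p^{(F,k)}_x$ for every key $x$ and every $k$, and then conclude that the estimate from $S^{(F,k)}$ is at least as good as that from a dedicated sample (you do this via Lemma~\ref{domprob:lemma}; the paper phrases it as the stronger set identity $S^{(F\cup\{f\},k)}=S^{(F,k)}$ and then invokes the multi-objective guarantee). Where you genuinely differ is in how the domination is established. The paper treats scalings $f=cg$ and pairwise sums $f=g^{(1)}+g^{(2)}$, which to cover a general combination $\sum_{g}\alpha_g g$ requires an implicit induction passing through intermediate functions that are no longer in $F$. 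Your single identity
$$p^{(f,1)}_x=\sum_{g}\frac{\alpha_g S_g}{\sum_{g'}\alpha_{g'}S_{g'}}\,p^{(g,1)}_x \;\leq\; \max_{g\in F}p^{(g,1)}_x = p^{(F,1)}_x$$
dispatches an arbitrary finite nonnegative combination in one step, lifts to all $k$ by monotonicity of $\min\{1,kt\}$ exactly as in \eqref{uPPSMO}, and your treatment of the degenerate cases ($f\equiv 0$, or terms with $S_g=0$) handles a point the paper glosses over. This is a cleaner write-up of the same underlying idea.

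The gap is one of scope: the closure $\overline{F}$ is defined by the guarantee of Theorem~\ref{cvbound:thm} for estimates from $S^{(F,k)}$, and the paper asserts the theorem for both multi-objective pps and multi-objective bottom-$k$ samples; your argument covers only pps. For bottom-$k$ the probability-domination route breaks down: as the paper notes, the multi-objective bottom-$k$ sample $S^{(F,k)}$ need not include a dedicated bottom-$k$ sample $S^{(f,k)}$ for the combination $f$, so one cannot simply compare (conditional) inclusion probabilities and invoke Lemma~\ref{domprob:lemma}; bounding the $\CV$ there requires a separate, more involved argument, which the paper itself defers to its full version. You should either state explicitly that your proof establishes the pps case only, or supply the bottom-$k$ argument.
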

\begin{proof}
We first consider pps samples, where we establish the stronger
claim $S^{(F\cup\{f\},k)}=S^{(F,k)}$, or equivalently, 
\begin{equation} \label{closureclaim}
\text{for all keys $x$,}\  p_x^{(f,k)} \leq p_x^{(F,k)}\ .
\end{equation}
For a function $g$, we use the notation $g({\cal X}) = \sum_{y}
g_y$, and recall that 
 $p_x^{(g,k)} = \min\{1, k \frac{g_x}{g({\cal X})}\}$. 
We first consider $f = c g$ for some $g\in F$.  In this case,
 $p_x^{(f,k)} = p_x^{(g,k)} \leq p_x^{(F,k)}$ and \eqref{closureclaim}
 follows.  
To complete the proof, it suffices to establish \eqref{closureclaim}  for $f=g^{(1)}+g^{(2)}$
such that $g^{(1)}, g^{(2)} \in F$.
Let $c$  be such that 
$\frac{g^{(2)}_x}{g^{(2)}({\cal X})} = c \frac{g^{(1)}_x}{g^{(1)}({\cal X})}$
we can assume WLOG that $c\leq 1$ (otherwise reverse $g^{(1)}$ and $g^{(2)}$).
For convenience denote
$\alpha= g^{(2)}({\cal X})/g^{(1)}({\cal  X})$. 
Then we can write
\begin{eqnarray*}
\frac{f_x}{f({\cal X})} &=& \frac{g^{(1)}_x+
                g^{(2)}_x}{g^{(1)}({\cal X})+g^{(2)}({\cal X})} \\
&=& \frac{(1+c\alpha) g^{(1)}_x}{(1+\alpha) g^{(1)}({\cal X})} =  \frac{1+c\alpha}{1+\alpha} \frac{g^{(1)}_x}{g^{(1)}({\cal X})} \\
&\leq & \frac{g^{(1)}_x}{g^{(1)}({\cal X})}  =
\max\{\frac{g^{(1)}_x}{g^{(1)}({\cal 
      X})},\frac{g^{(2)}_x}{g^{(2)}({\cal X})}   \}\ .
\end{eqnarray*}
Therefore $p_x^{(f,k)} \leq \max\{p_x^{(g^{(1)},k)},p_x^{(g^{(2)},k)}\}\leq p_x^{(F,k)}\ .$

 The proof for multi-objective bottom-$k$ samples is more involved,
 and deferred to the full version.  Note that 
the multi-objective  bottom-$k$ sample $S^{(F,k)}$ may not include
 a bottom-$k$ sample $S^{(f,k)}$, but it is still possible to bound
 the CV.
\end{proof}

\section{The universal sample for monotone statistics} \label{monotone:sec}
In this section we consider the 
(infinite) set $M$ of {\em all} monotone non-decreasing functions and the objectives
$(f,k)$ for all $f\in M$.

We show that the  multi-objective pps and bottom-$k$ samples $S^{(M,k)}$,
which we refer to as the {\em universal monotone sample}, are larger than
a single dedicated weighted sample by at most a logarithmic factor in the number of keys.  We will also show that this is tight.
We will also present efficient universal monotone bottom-$k$ sampling scheme for streamed or distributed data.

  We take the following steps.
We consider the multi-objective sample $S^{(\threshold,k)}$ for the set
 $\threshold$ of all threshold functions (recall that 
$\threshold_T(x) = 1$ if $x \geq T$  and 
  $\threshold_T=0$ otherwise).  We express the inclusion probabilities
  $p^{(\threshold,k)}_x$  and bound the sample size.
  Since all threshold functions are monotone, $\threshold \subset M$.
We will establish that $S^{(\threshold,k)}= S^{(M,k)}$.\footnote{  
  We observe that
  $M = \overline{\threshold}$.  This follows from 
Theorem \ref{closure:thm}, after noticing that
any $f\in M$ can be expressed as a 
non-negative combination of threshold functions $$f(y)= \int_{0}^\infty 
\alpha(T) \threshold_T(y) dT\ ,$$ for some function $\alpha(T)  \geq 0$.  
We establish here the stronger relation $S^{(\threshold,k)}= S^{(M,k)}$.}
We start with the simpler case of pps and then move on to bottom-$k$ samples.

\subsection{Universal monotone pps}

\begin{theorem}
  Consider a data set ${\cal D} = \{(x,w_x)\}$  of $n$ keys and the
sorted order of keys $x$ by non-increasing $w_x$.
  Then a key $x$ that is in position $i$ in the sorted order has
  base multi-objective pps probability
  $$p_x^{(\thresh,1)}= p_x^{(M,1)}\leq 1/i\ .$$
  When all keys have unique weights, equality holds.
\end{theorem}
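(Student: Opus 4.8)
The plan is to first reduce the base multi-objective pps probability over the threshold family to an explicit one-dimensional optimization, evaluate it against the sorted position, and only then reconcile $\thresh$ with the full monotone family $M$. By definition $p_x^{(\thresh,1)} = \max_{T>0} p_x^{(\threshold_T,1)} = \max_{T>0} \frac{\threshold_T(w_x)}{\sum_y \threshold_T(w_y)}$, where $\threshold_T(w)=I_{w\ge T}$. The numerator equals $1$ exactly when $T\le w_x$ and $0$ otherwise, so only thresholds $T\le w_x$ contribute; among these the denominator $|\{y:w_y\ge T\}|$ is non-increasing in $T$, so the maximizer is $T=w_x$, giving $p_x^{(\thresh,1)} = 1/|\{y:w_y\ge w_x\}|$.

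Next I would translate this into the sorted-position statement. If $x$ occupies position $i$ in the non-increasing order, the $i$ keys in positions $1,\dots,i$ (including $x$ itself) all have weight at least $w_x$, so $|\{y:w_y\ge w_x\}|\ge i$ and therefore $p_x^{(\thresh,1)}\le 1/i$. When the weights are all distinct, no key beyond position $i$ can have weight $\ge w_x$, so $\{y:w_y\ge w_x\}$ consists of exactly the $i$ keys in positions $1,\dots,i$, and the bound is attained with equality, $p_x^{(\thresh,1)}=1/i$. This settles the claim for the threshold family.

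It remains to prove $p_x^{(M,1)}=p_x^{(\thresh,1)}$. Since $\thresh\subset M$, maximizing the base probability over a larger family can only increase it, which gives $p_x^{(M,1)}\ge p_x^{(\thresh,1)}$ for free. For the reverse inequality I would use the representation of any $f\in M$ as a non-negative combination of thresholds, $f(y)=\int_0^\infty \alpha(T)\,\threshold_T(y)\,dT$ with $\alpha(T)\ge 0$ (absorbing any constant offset into a threshold $\threshold_{T_0}$ with $T_0$ below all data weights, which acts as the count function on the present keys). This is exactly the hypothesis of Theorem~\ref{closure:thm}, whose key inequality is the mediant bound $\frac{f_x}{f(\calX)} = \frac{\int \alpha(T)\threshold_T(w_x)\,dT}{\int \alpha(T)\big(\sum_y \threshold_T(w_y)\big)\,dT} \le \max_{T} \frac{\threshold_T(w_x)}{\sum_y \threshold_T(w_y)} = p_x^{(\thresh,1)}$. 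Maximizing the left-hand side over $f\in M$ yields $p_x^{(M,1)}\le p_x^{(\thresh,1)}$, and the two directions combine to the desired equality.

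I expect the main obstacle to be this last step: justifying the mediant inequality in its integral form and confirming that the threshold representation genuinely covers all of $M$ (constant, unbounded, and non-smooth functions alike) on the relevant data points. The finite-sum version of the mediant inequality is precisely what Theorem~\ref{closure:thm} supplies, so the real work is passing to the limit, either by approximating the integral by finite Riemann-type sums of thresholds and invoking continuity of the ratio, or by using the pointwise bound $\alpha(T)\threshold_T(w_x)\le p_x^{(\thresh,1)}\,\alpha(T)\sum_y\threshold_T(w_y)$ and integrating directly. The one edge case to note is that $\sum_y\threshold_T(w_y)=0$ forces $\threshold_T(w_x)=0$ as well, since $x$ is among the $y$, so no division by zero arises.
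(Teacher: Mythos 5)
Your proof is correct, but it reaches the monotone half of the claim by a genuinely different route than the paper. The threshold half is identical: like the paper, you maximize $\threshold_T(w_x)/\sum_y \threshold_T(w_y)$ over $T$, observe the denominator is non-increasing in $T$ so the maximum sits at $T=w_x$, and compare $|\{y : w_y \geq w_x\}|$ to the position $i$. For the direction $p_x^{(M,1)} \leq p_x^{(\thresh,1)}$, however, the paper never decomposes $f$: it simply notes that monotonicity forces every key $y$ with $w_y \geq w_x$ to satisfy $f_y \geq f_x$, so $\sum_y f_y \geq |\{y : w_y \geq w_x\}|\cdot f_x \geq i f_x$ and hence $p_x^{(f,1)} \leq 1/i$ --- a one-line counting argument with no representation theory. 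You instead write $f$ as a nonnegative combination of thresholds and push the mediant inequality underlying Theorem~\ref{closure:thm} through the integral. Both work, and the trade-off is real. The paper's argument is shorter and entirely sidesteps the difficulty you flag at the end: a density $\alpha(T)$ does not exist for discontinuous $f\in M$ (thresholds themselves need an atom), so your route needs either a Lebesgue--Stieltjes measure in place of $\alpha(T)\,dT$ or, cleanest on a finite data set, the finite decomposition $f = \sum_{j=1}^m \bigl(f(v_j)-f(v_{j+1})\bigr)\threshold_{v_j}$ over the distinct data weights $v_1 > \cdots > v_m$ (with $f(v_{m+1}):=0$), which agrees with $f$ on every key and lets the finite-sum closure bound apply verbatim, with no limiting argument at all. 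What your route buys is structure: it exhibits the theorem as an instance of the sampling-closure phenomenon $M = \overline{\thresh}$ --- exactly the observation the paper relegates to a footnote and deliberately bypasses --- and the same template transfers to other families and their nonnegative spans, such as the capping functions of Section~\ref{allcaps:sec}. Your pointwise bound $\alpha(T)\threshold_T(w_x) \leq p_x^{(\thresh,1)}\,\alpha(T)\sum_y \threshold_T(w_y)$, integrated, is the right rigorous form of the mediant step, and it additionally makes explicit the exact value $p_x^{(M,1)} = p_x^{(\thresh,1)} = 1/|\{y : w_y \geq w_x\}|$ when weights repeat, which the stated $\leq 1/i$ form leaves implicit.
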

\begin{proof}
  Consider the function $\thresh_{w_x}$.  The function has weight $1$ on all the $\geq i$ keys with weight $\geq w_x$.  Therefore, the base pps probability is
  $p^{(\thresh_{w_x},1)}_x \leq  1/i$.  When the keys have unique weights then there are exactly $i$ keys $y$ with weight $w_y \geq w_x$ and we have
  $p^{(\thresh_{w_x},1)}_x = 1/i$.
  If we consider all threshold functions, then
$p^{(\thresh_{T},1)}_x = 0$ when $T>w_x$ and $p^{(\thresh_{T},1)}_x \leq 1/i$ when $T\leq w_x$.  Therefore,
  $$p^{(\thresh,1)}_x = \max_T p^{(\thresh_T,1)}_x = p^{(w_x,1)}_x =  1/i\ .$$

  We now consider an arbitrary monotone function $f_x = f(w_x)$.
  From monotonicity there are at least $i$ keys with $f_y \geq f_x$ therefore, $f_x/\sum_y f_y \leq 1/i$.
  Thus, $p^{(f,1)}_x = f_x/\sum_y f_y \leq 1/i$ and
  $$p^{(M,1)}_x = \max_{f\in M} p^{(f,1)}_x \leq 1/i\ .$$
\end{proof}  

There is a simple main-memory sampling scheme where we sort the keys, compute the probabilities $p^{(M,1)}$ and then $p^{(M,k)} = \min\{1, kp^{(M,1)}\}$ and compute a sample accordingly.
We next present universal monotone bottom-$k$ samples and sampling schemes that are efficient on streamed or distributed data.

\subsection{Universal monotone bottom-$k$}

 \begin{theorem}  \label{unimon:thm}
Consider a data set ${\cal D} = \{(x,w_x)\}$ of $n$
keys.   The universal monotone bottom-$k$ sample has expected size 
$\E[|S^{(M,k)}|] \leq k\ln n$ and can be computed using
$O(n + k \log n \log k)$ operations.
 \end{theorem}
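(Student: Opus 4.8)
The plan is to prove the size bound and the running time separately, in both cases reducing the infinite family $M$ to the family $\threshold$ of threshold functions via the identity $S^{(\threshold,k)}=S^{(M,k)}$ asserted earlier in this section. For the size bound I would first fix the randomization $\{u_x\}$ (equivalently the seeds $r_x$) and sort keys by non-increasing weight, writing $i$ for the rank of a key $x$, so that exactly $i$ keys have weight $\ge w_x$ (with unique weights; ties only shrink the relevant set and hence can only lower probabilities). A key $x$ lies in $S^{(\threshold,k)}$ iff it has one of the $k$ smallest seeds for \emph{some} $\threshold_T$, and the most permissive choice is $T=w_x$, whose active set is precisely the $i$ keys of weight $\ge w_x$. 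Since the seeds are i.i.d.\ and therefore exchangeable (this holds for both ppswor and priority, as each $r_x$ is monotone in $u_x$), the key $x$ is among the $k$ smallest of these $i$ seeds with probability $\min\{1,k/i\}$. Summing the marginals gives
\begin{align*}
\E[|S^{(M,k)}|] &= \sum_{i=1}^{n}\min\{1,k/i\} \;\le\; k + k\sum_{i=k+1}^{n}\tfrac1i\\
&\le\; k\bigl(1+\ln(n/k)\bigr)\;\le\; k\ln n .
\end{align*}

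For the running time the whole difficulty is to obtain the \emph{linear} leading term $O(n)$ rather than the $O(n\log n)$ cost of a full sort. The structural observation, matching the advertised All-Distances Sketch connection, is that $S^{(M,k)}$ is a single ADS in which ``distance'' is the weight rank: it can be built by scanning keys in non-increasing weight order while maintaining the running bottom-$k$ threshold $\tau$ (the $k$-th smallest seed seen so far), admitting a key iff its seed is below the current $\tau$; since $\tau$ is non-increasing, any key whose seed already exceeds $\tau$ can never be admitted later. To avoid sorting I would partition the keys into $O(\log(n/k))$ weight-rank windows of geometrically increasing size $(2^{j-1}k,2^{j}k]$ by recursive linear-time selection, costing $O(n+n/2+n/4+\cdots)=O(n)$ overall. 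I then process the windows in rank order; entering window $j$ with threshold $\tau_j$, the only candidates for admission are the keys whose seed is below $\tau_j$ (all others are permanently dominated), and these can be extracted in $O(|\text{window}|)$ time, summing to $O(n)$. By exchangeability the expected number of candidates per window is $O(k)$, so sorting the candidates of a window by weight and threading them through the persistent bottom-$k$ heap costs $O(k\log k)$ per window, yielding $O(k\log k\log n)$ across all windows and a total of $O(n+k\log n\log k)$.

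The main obstacle is exactly this linear term: one must establish that at most $O(k)$ keys per doubling window can ever be admitted, that these candidates are precisely the keys whose seed is below the incoming threshold $\tau_j$, and that discarding every other key of the window (without ever sorting it) leaves the running threshold—and hence the final sample and its inclusion probabilities—unchanged. Verifying this rests on the monotonicity of $\tau$ within and across windows together with the expectation computation above; the remaining bookkeeping (heap maintenance and recording the thresholds needed for the inverse-probability estimator) is routine.
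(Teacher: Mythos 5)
Your proof splits into the two claims exactly as the paper does, and the size bound is essentially the paper's own argument: reduce $M$ to threshold functions, note that membership is decided by $\threshold_{w_x}$, and sum the marginal probabilities $\min\{1,k/i\}$ over ranks $i$ (your derivation even shares the paper's harmless looseness for very small $k$, where $k(1+\ln(n/k))\leq k\ln n$ needs $k\geq e$). The running-time analysis, however, takes a genuinely different route. The paper never sorts by weight at all: its second algorithm scans keys in \emph{increasing $u_x$ order}, maintaining a min-heap of the $k$ largest weights seen so far, and kills the $O(n\log n)$ term by observing that the algorithm draws the i.i.d.\ uniforms itself and can therefore generate them already sorted in $O(n)$ time; each key then costs $O(1)$ unless it enters the sample, which happens an expected $\leq k\ln n$ times at $O(\log k)$ each. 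Your approach keeps the decreasing-weight scan but replaces the full sort by linear-time selection into geometric rank windows $(2^{j-1}k,2^jk]$, filters each window against the incoming threshold $\tau_j$ (sound, because the running threshold is non-increasing, so keys with $u_x\geq\tau_j$ are permanently dominated), and sorts only the expected $O(k)$ surviving candidates per window. Both arguments are correct and give the stated bound. What each buys: the paper's route is simpler and directly matches its streamed/composable implementation, but its $O(n)$ term hinges on the freedom to generate the $u_x$ pre-sorted; your route works even when the $u_x$ are externally fixed, e.g., produced by a random hash function for coordination across distributed data (which the paper itself recommends), a setting in which scan-by-$u$ would require an actual $O(n\log n)$ sort. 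One small hole in your accounting: you charge $O(k\log k)$ per window for sorting candidates using only $\E[C_j]=O(k)$, where $C_j$ is the candidate count, but an expected sorting cost $\E[C_j\log C_j]$ is not controlled by the expectation of $C_j$ alone; you need the (standard) concentration of $C_j$ --- conditioned on $\tau_j$ it is binomial with mean $O(k)$ and exponentially decaying upper tail --- to conclude the $O(k\log k\log n)$ total. Your closing remark gestures at this, and it is routine to fill in, but as written it is an assertion rather than a proof.
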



 For a particular $T$, the bottom-$k$ sample $S^{(\threshold_T,k)}$
is the set of $k$ keys with smallest
$u_x$ among keys with $w_x \geq T$. 
The set of keys in the multi-objective sample is
$S^{(\threshold,k)}=\bigcup_{T>0} S^{(\threshold_T,k)}$.
We show that a key $x$ is in the multi-objective sample for
$\threshold$ if and only if it is in the bottom-$k$ sample for $
\threshold_{w_x}$:
\begin{lemma} \label{simpleinclusion1:lemma}
Fixing $\{ u_y\}$, for any key $x$,
$$ x\in S^{(\threshold,k)} \iff x\in 
S^{(\threshold_{w_x},k)}\ .$$
\end{lemma}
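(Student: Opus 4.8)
The plan is to prove the two implications separately, with the reverse direction being immediate and the forward direction resting on a nesting argument on the active sets. Throughout I write $A_T = \{y : w_y \geq T\}$ for the set of keys ``active'' for $\threshold_T$, and I use that, by the definition recalled just before the lemma, $S^{(\threshold_T,k)}$ consists of the $k$ keys of smallest $u_y$ drawn from $A_T$ only.

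For the ($\Leftarrow$) direction, observe that $\threshold_{w_x}$ is itself a threshold function, with parameter $T = w_x > 0$. Hence $S^{(\threshold_{w_x},k)}$ is one of the dedicated samples whose union defines $S^{(\threshold,k)} = \bigcup_{T>0} S^{(\threshold_T,k)}$, and so $x \in S^{(\threshold_{w_x},k)}$ gives $x \in S^{(\threshold,k)}$ at once.

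For the ($\Rightarrow$) direction, suppose $x \in S^{(\threshold,k)}$, so that $x \in S^{(\threshold_T,k)}$ for some $T > 0$. I would first record that membership forces $w_x \geq T$, since the sample for $\threshold_T$ is drawn from $A_T$ only (equivalently, a key with $w_x < T$ has $\threshold_T(w_x)=0$ and thus an infinite seed). The nesting $A_{w_x} \subseteq A_T$ then holds because $w_x \geq T$. Since $x \in S^{(\threshold_T,k)}$, at most $k-1$ keys $y \in A_T$ have $u_y < u_x$; passing to the subset $A_{w_x}$ can only remove such keys, so at most $k-1$ keys $y \in A_{w_x}$ have $u_y < u_x$. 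As $x \in A_{w_x}$ (because $w_x \geq w_x$), this places $x$ among the $k$ smallest $u$-values of $A_{w_x}$, i.e. $x \in S^{(\threshold_{w_x},k)}$. Here I use that for an active key the $\threshold_T$-seed $r_y/\threshold_T(w_y) = r_y$ is strictly increasing in $u_y$ under both the ppswor and priority instantiations, so ordering by seed agrees with ordering by $u_y$.

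The only point requiring care is the bookkeeping around the definition of ``bottom-$k$'': the degenerate case $|A_T| \leq k$, where every active key is sampled, and the convention for ties in $u_y$ (which occur with probability zero under $u_x \sim U[0,1]$, but should be settled by a fixed global order). Neither affects the argument, since the key fact is monotone and purely combinatorial: \emph{the number of active keys with smaller seed than $x$ cannot increase when the active set shrinks from $A_T$ to $A_{w_x}$}. I would phrase the counting exactly this way so that the conclusion is uniform across both cases.
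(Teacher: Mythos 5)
Your proof is correct and follows essentially the same route as the paper's: both directions come down to showing that the rank of $x$ in the $\threshold_T$-seed order is minimized at $T=w_x$, so membership in some $S^{(\threshold_T,k)}$ forces membership in $S^{(\threshold_{w_x},k)}$. The paper establishes this via a pairwise seed comparison (case-split on $w_y\geq w_x$ versus $w_y<w_x$), whereas you derive it from the nesting $A_{w_x}\subseteq A_T$ of the active sets; the two are interchangeable here, and your write-up additionally makes explicit the degenerate case $|A_T|\leq k$ and the tie-breaking convention that the paper leaves implicit.
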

\begin{proof}
Consider the position $t(x,T)$ of $x$ in an ordering of keys $y$ induced by 
$\threshold_T$-$\seed{y}$.  We claim that  if for a key $y$ we have
$\threshold_T$-$\seed{x} < \threshold_T$-$\seed{y}$ for some $T>0$, this must hold
for $T=w_x$.   The claim can be established by separately considering
$w_y \geq w_x$ and $w_y<w_x$.  The claim implies that $t(x,T)$ is
minimized for $T=w_x$.
\end{proof}

We now consider the
auxiliary keys $Z$ associated with this sample.  Recall that these keys
are not technically part of 
the sample but the information $(u_x,w_x)$ for $x\in Z$ is needed in order 
to compute the conditional inclusion probabilities 
$p^{(\threshold,k)}_x$ for $x\in S$. Note that it follows from
Lemma~\ref{simpleinclusion1:lemma} that for all keys $x$, $p^{(\threshold,k)}_x  =
p^{(\threshold_{w_x},k)}_x $.
For a key $x$, let  $Y_x = \{y\not= x \mid w_y \geq w_x\}$ 
 be the set of keys other than $x$ that have weight that is at least 
 that of $x$.  Let $y_x$ be the key with $k$th smallest $u_x$
 in $Y_x$, when $|Y_x| \geq k$. 
The auxiliary keys are $Z = \{y_x \mid x\in 
S\} \setminus S$. A key $x$ is  included in the sample 
with probability $1$ if $y_x$ is not defined (which means it has one 
of the $k$ largest weights).  Otherwise, it is (conditionally) included 
if and only if $u_x < u_{y_x}$.   To compute the inclusion probability 
$p^{(\threshold,k)}_x$ from $S\cup Z$, we do as follows.  If there are
$k$ or fewer keys in $S\cup Z$ with weight that is at most $w_x$, then
$p^{(\threshold,k)}_x =1$ (For correctness, note that in this case all keys 
with weight $\geq w_x$ would be in $S$.)  Otherwise, observe that
$y_x$ is the key with $(k+1)$th smallest $u$
value in $S\cup Z$ among all keys $y$ with $w_y\geq w_x$. We compute $y_x$ from
the sample and use
$p^{(\threshold,k)}_x = u_{y_x}$.
Note that 
when weights are unique, $Z=\emptyset$.

The definition of $S^{(\threshold,k)}$ is equivalent to that of
an All-Distances Sketch (ADS) computed with respect to 
weights $w_x$ (as inverse distances)
\cite{ECohen6f,ECohenADS:TKDE2015}, and we can apply some 
algorithms and analysis.
In particular, we obtain that 
$\E[|S^{(\threshold,k)}|] \leq  k\ln n$ and the size is
well-concentrated around this expectation.
The argument is simple: Consider keys ordered by decreasing weight.
The probability that the $i$th key has one of the $k$ smallest $u_x$
values, and thus is a member of $S^{(\threshold,k)}$ is at most
$\min\{1,k/i\}$.  Summing the expectations over all keys we obtain
$\sum_{i=1}^n \min\{1,k/i\} <  k\ln n$. 
We shall see that the  bound is
  asymptotically tight when weights are unique. With repeated weights,
  however, the sample size can be much
smaller.\notinproc{\footnote{The sample we would obtain with repeated weights is always a subset of
the sample we would have obtained with  tie 
breaking. In particular, 
the sample size can be at most $k$ times the number 
of unique weights.}}

\begin{lemma}
For any data set $\{(x,w_x)\}$, when using the same randomization
$\{u_x\}$ to generate both samples,
$S^{(M,k)}=S^{(\threshold,k)}$.
\end{lemma}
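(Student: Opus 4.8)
The plan is to prove the two inclusions $S^{(\threshold,k)}\subseteq S^{(M,k)}$ and $S^{(M,k)}\subseteq S^{(\threshold,k)}$ separately. The first is immediate: every threshold function is monotone, so $\threshold\subseteq M$, and enlarging the objective set can only add keys to the multi-objective sample. Hence all the content lies in the reverse inclusion $S^{(M,k)}\subseteq S^{(\threshold,k)}$.

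For the reverse inclusion I would fix the randomization $\{u_x\}$, and hence the seed numerators $\{r_x\}$, and suppose $x\in S^{(M,k)}$; by definition there is a monotone $f$ with $x\in S^{(f,k)}$. By Lemma~\ref{simpleinclusion1:lemma} it suffices to show $x\in S^{(\threshold_{w_x},k)}$. Recall that $S^{(\threshold_{w_x},k)}$ is exactly the set of $k$ keys of smallest $r$ among those with $w_y\geq w_x$, so $x$ is excluded precisely when at least $k$ keys $y\neq x$ satisfy $w_y\geq w_x$ and $r_y<r_x$. The key step is to inject every such blocking key into the set of keys having a smaller $f$-seed than $x$. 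Assuming $f(w_x)>0$, monotonicity gives $f(w_y)\geq f(w_x)>0$, whence
$$\frac{r_y}{f(w_y)}\leq \frac{r_y}{f(w_x)}<\frac{r_x}{f(w_x)}\ .$$
Thus each blocking key has a strictly smaller $f$-seed than $x$. Since $x\in S^{(f,k)}$, fewer than $k$ keys have smaller $f$-seed than $x$, so by the injection fewer than $k$ keys block $x$, giving $x\in S^{(\threshold_{w_x},k)}$ and therefore $x\in S^{(\threshold,k)}$.

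The step I expect to be the main obstacle is the degenerate case $f(w_x)=0$, where $x$ has an infinite $f$-seed and the displayed inequality is unavailable. I would dispose of it by reduction to the positive case: replacing $f$ by $f+\epsilon$, which is still monotone and has value $\epsilon>0$ at $w_x$, keeps $x$ in the bottom-$k$ sample for all sufficiently small $\epsilon>0$, since the originally finite seeds converge to their $f$-values while the originally infinite seeds blow up but retain their relative order by $r_y$. We may therefore assume $f(w_x)>0$ without loss of generality and invoke the argument above. Combining the two inclusions yields $S^{(M,k)}=S^{(\threshold,k)}$. Finally I would note that the pps analogue requires no separate argument: the identity $p^{(\threshold,1)}_x=p^{(M,1)}_x$ established earlier gives $p^{(M,k)}_x=\min\{1,kp^{(M,1)}_x\}=\min\{1,kp^{(\threshold,1)}_x\}=p^{(\threshold,k)}_x$, so with shared $\{u_x\}$ the two pps samples coincide key by key.
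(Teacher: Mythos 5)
Your proof is correct and takes essentially the same route as the paper: the paper's argument is precisely your injection step --- for monotone $f$, any key $y$ with $w_y\geq w_x$ and $r_y<r_x$ has $f$-seed $r_y/f(w_y)\leq r_y/f(w_x)<r_x/f(w_x)$, so $x\in S^{(f,k)}$ forces $x$ to be among the $k$ smallest $u$-values over $\{y: w_y\geq w_x\}$, i.e.\ $x\in S^{(\threshold_{w_x},k)}$, and Lemma~\ref{simpleinclusion1:lemma} finishes. You are merely more explicit about the trivial inclusion $\threshold\subseteq M$, the degenerate case $f(w_x)=0$, and the pps analogue, all of which the paper leaves implicit.
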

\begin{proof}
Consider $f\in M$ and the samples obtained for some fixed
randomization $u_y$ for all keys $y$.
Suppose that a key $x$ is in the bottom-$k$ sample $S^{(f,k)}$.  By
definition, we have that $f$-$\seed{x}=r_x/f(w_x)$ is among the $k$ smallest
$f$-seeds of all keys.  Therefore, it must be 
among the $k$ smallest $f$-seeds in the set $Y$ of keys with $w_y \geq w_x$.
From monotonicity of $f$, this implies that $r_x$ must be one of the $k$ smallest in $\{r_y \mid 
y\in Y\}$, which is the same as $u_x$ being one of the $k$ smallest in
$\{u_y \mid y\in Y\}$.   This implies that $x\in S^{(\threshold_{w_x},k)}$.
\end{proof}

\subsection{Estimation quality}

The  estimator \eqref{MOest} with the conditional 
 inclusion probabilities $p^{(M,k)}_x$ 
generalizes the HIP estimator of \cite{ECohenADS:TKDE2015} to
sketches computed for non-unique weights.
Theorem \ref{MOest:thm} implies that for any $f\in M$ and $H$,
$\CV[\widehat{\Sum}(f;H)] \leq \frac{1}{\sqrt{q^{(f)}(H) (k-1)}}\ .$
 When weights are unique and we estimate statistics over all keys,
we have the tighter bound 
$\CV[\widehat{\Sum}(f;{\cal X})] \leq \frac{1}{\sqrt{2k-1}}\ $ \cite{ECohenADS:TKDE2015}.

\subsection{Sampling algorithms}
The samples, including the auxiliary information, are composable.
Composability holds even when
we allow multiple elements to have the same key $x$ and interpret 
$w_x$ to be the maximum weight over elements with key $x$.  
To do so, we use a random hash function to generate $u_x$ consistently for
multiple elements with the same key.
To compose multiple samples, we take a union of the elements, replace
multiple elements with same key with the one of maximum
weight, and apply a sampling algorithm to the set of remaining elements.
The updated inclusion probabilities can be computed from the composed
sample.  

We present two algorithms that compute the sample $S^{(M,k)}$
along with the auxiliary keys $Z$ and the inclusion probabilities
$p^{(M,k)}_x$ for $x\in S^{(M,k)}$.  
The algorithms 
process the elements 
 either in order of decreasing $w_x$ or in order of increasing $u_x$.
\notinproc{
These two orderings may be suitable for different applications and it
is worthwhile to present both: In the related context of all-distances
sketches, both ordering on distances were used \cite{ECohen6f,PGF_ANF:KDD2002,hyperANF:www2011,ECohenADS:TKDE2015}.}
The algorithms are 
correct when applied to any set of $n$ elements that includes 
$S\cup Z$.  
Recall that the inclusion probability $p^{(M,k)}_x$  is the $k$th smallest $u_x$
among keys with $w_x \leq w$.  Therefore, all keys with the 
same weight have the same inclusion probability.  For convenience, we
thus express the probabilities as a function $p(w)$ of the weights.

\begin{algorithm}[h]\caption{Universal monotone
    sampling: Scan by weight \label{univMbyw:alg}}
{\small
Initialize empty max heap $H$ of size $k$ \tcp*{$k$ smallest $u_y$
  values processed so far}
$ptau \gets +\infty$;  \tcp*{** omit with unique weights}
\For{$(x,w_x)$ by decreasing $w_x$ then increasing $u_x$ order}
{
\If{$|H|<k$}{$S \gets S \cup \{x\}$; Insert $u_x$ to $H$; $p(w_x) \gets 1$; Continue}
\eIf{$u_x < \max(H)$}
{$S \gets S \cup \{x\}$; $p(w_x) \gets \max(H)$ \tcp*{$x$ is sampled}
$ptau \gets \max(H)$; $prevw \gets w_x$  \tcp*{**}
Delete $\max(H)$ from $H$\\
Insert $u_x$ to $H$
}
(\tcp*[h]{**}){
\If{$u_x < ptau$ and $w_x = prevw$}
{
$Z \gets Z \cup \{x\}$;
$p(w_x) \gets u_x$
}
}
}
}
\end{algorithm}

Algorithm \ref{univMbyw:alg} processes keys by
order of decreasing weight, breaking ties by increasing $u_x$.
  We maintain a max-heap $H$ of the $k$ smallest $u_y$ values processed so
far. When processing a current key $x$, we include $x\in S$ if
$u_x < \max(H)$.  If including $x$, we delete $\max(H)$ and insert
$u_x$ into $H$.  Correctness follows from $H$ being the $k$ smallest
$u$ values of keys with weight at least $w_x$.  
When weights are unique, 
the probability $p(w_x)$ is the $k$th largest $u$ value in
$H$ just before $x$ is inserted.
When weights are not unique, we also need to compute
$Z$. To do so, we track the previous $\max(H)$, which we call
$ptau$.  If the current key $x$ has $u_x\in (\max(H),ptau)$,  we
include $x\in Z$.  It is easy to verify that in this case, $p(w_x) =
u_x$.  Note that the algorithm may overwrite $p(w)$ multiple times, as
keys with weight $w$ are inserted to the sample or to $Z$.

\notinproc{
Algorithm \ref{univMbyu:alg} processes keys in order of 
increasing $u_x$.  The algorithm maintains a min heap $H$
of the $k$ largest weights processed so far. 
With unique weights, the current key $x$ is included in the sample if
and only if
$w_x > \min(H)$.  If $x$ is included, we delete from the heap $H$ the
key with weight $\min(H)$ and insert $x$.
When weights are not unique, we also track the weight $w$ of the
previous removed key from $H$. When processing a key $x$ then if $w_x
= \min(H)$ and $w_x > prevw$ then $x$ is inserted to $Z$.

The computed inclusion probabilities $p(w)$ with 
unique  weights is $u_y$, were $y$ is the key whose
processing triggered the deletion of the key $x$ with weight $w$ from
$H$.  To establish correctness, consider 
the set $H$, just after $x$ is deleted.
By construction, $H$ contains 
the $k$ keys with weight $w_y<w_x$ that have smallest $u$ values.  
Therefore, $p(w_x) = \max_{y\in H} u_y$.  Since we process keys by increasing
$u_y$, this maximum $u$ value in $H$ is the value of the most recent
inserted key, that is, the key $y$ which
triggered the removal of $x$.
Finally, the keys that remain in $H$ in the end are those with the $k$ largest
weights.  The algorithms correctly assigns $p(w_x)=1$  for these keys.

 When multiple keys can have the same weight, then $p(w)$ is the
 minimum of $\max_{y\in H} u_y$ after the first key of weight $w$ is
 evicted, and the minimum $u_z$ of a key $z$ with $w_z=w$ that was not
 sampled.
If the minimum is realized at such a key $z$, that key is included in
$Z$, and the algorithm set $p(w)$ accordingly when $z$ is processed.
If $p(w)$ is not set already when the first key of weight $w$ is
deleted from $H$, the algorithm correctly assigns $p(w)$ to be 
$\max_{y\in H} u_y$.  After all keys are processed, $p(w_x)$ is set
for remaining keys  $x\in H$ where a key with the same weight was
previously deleted from $H$.  Other keys are assigned $p(w)=1$.

We now analyze the number of operations.
With both algorithms, the cost of processing a 
key is $O(1)$ if the key is not inserted and $O(\log k)$ if the key is
included in the sample.  Using the bound on sample size, we obtain a
bound of $O(n+k \ln n \log k)$ on processing cost.
The sorting requires $O(n\log n)$ computation, which dominates the 
computation (since typically $k\ll n$).  When $u_x$ are assigned
randomly, however, we can generate them 
with a sorted order by $u_x$ in $O(n)$ time, enabling a 
faster $O(n+k\log k \log n)$ computation.

\begin{algorithm}[h]\caption{Universal monotone
    sampling: Scan by $u$ \label{univMbyu:alg}}
{\small
$H \gets \perp$ \tcp*{min heap of size $k$, prioritized by
lex order on $(w_y,-u_y)$, containing keys with largest priorities
processed so far}
$prevw \gets \perp$\\
\For{$x$ by increasing $u_x$ order}
{
\If{$|H|<k$}{$S \gets S \cup \{x\}$;  Insert $x$ to $H$; Continue}
$y \gets \arg\min_{z\in H} (w_z,-u_z)$ \tcp*{The min weight key in $H$ with 
  largest $u_z$}
\eIf{$w_x > w_y$}
{$S \gets S \cup \{x\}$ \tcp*{Add $x$ to sample}
$prevw \gets w_y$\\
\If{$p(prevw) = \perp$}{$p(prevw) \gets u_x$} 
Delete $y$ from $H$\\
Insert $x$ to $H$; 
}
(\tcp*[h]{**}) 
{
\If{$w_x = w_y$ and $w_x > prevw$}
{
$Z \gets Z \cup \{x\}$;  $p(w_x)\gets u_x$
}
}
}
\For(\tcp*[h]{keys with largest weights}){$x\in H$}
{
\If{$p(w_x) = \perp$}{$p(w_x) \gets 1$}
}
}
\end{algorithm}
}

\subsection{Lower bound on sample size}
We now show that the worst-case factor of $\ln n$ on the size of
universal monotone sample is in a sense necessary.  It suffices to
show this for threshold functions:
\begin{theorem} \label{lowerboundMO:lemma}
Consider data sets where all keys have unique weights.
Any sampling scheme with a nonnegative unbiased estimator
that for all $T>0$ and $H$ has $$\CV[\widehat{\Sum}(\threshold_T;H)] \leq
1/\sqrt{q^{(\threshold_T)}(H)k}\ ,$$ must have samples of 
expected size $\Omega(k\ln n)$.  
\end{theorem}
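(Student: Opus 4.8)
The plan is to reduce the sample-size lower bound to the per-key inclusion-probability bound of Theorem~\ref{probbound}, applied once for every key with the threshold tailored to that key. Since all weights are unique, I would order the keys as $x_1,x_2,\dots,x_n$ by strictly decreasing weight, so that $x_i$ is the key of rank $i$. The expected sample size equals $\E[|S|]=\sum_{i=1}^n \Pr[x_i\in S]$ by linearity of expectation, and crucially this identity uses only the \emph{marginal} inclusion probabilities, so it is valid for an arbitrary (possibly dependent) sampling scheme, not just a Poisson one.

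For the key $x_i$ I would use the objective $f=\thresh_{w_{x_i}}$. Its support is exactly the set of keys of rank $\le i$ (those with weight $\ge w_{x_i}$), of which there are exactly $i$ by uniqueness of weights, so the base probability is $p^{(\thresh_{w_{x_i}},1)}_{x_i}=1/i$. This is in fact the largest base probability of $x_i$ over all thresholds: for $T>w_{x_i}$ the key leaves the support, while for $T<w_{x_i}$ the support only grows. Since the scheme meets $\CV[\widehat{\Sum}(\thresh_T;H)]\le 1/\sqrt{q^{(\thresh_T)}(H)k}$ for every $T$ and $H$, I can invoke Theorem~\ref{probbound} with this $f$ and the single-key segment $H=\{x_i\}$ to conclude
\[
\Pr[x_i\in S]\;\ge\;\frac{p^{(\thresh_{w_{x_i}},1)}_{x_i}\,k}{p^{(\thresh_{w_{x_i}},1)}_{x_i}\,k+1}\;=\;\frac{k}{k+i}\ .
\]

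Summing these per-key bounds gives $\E[|S|]\ge \sum_{i=1}^n \frac{k}{k+i}=k\,(H_{n+k}-H_k)$, where $H_m$ is the $m$-th harmonic number. The final step is the routine estimate $\sum_{i=1}^n \frac{k}{k+i}\ge \frac{k}{2}\ln(n/k)$ (splitting the sum at $i=k$ and bounding each tail term by $k/(2i)$), which is $\Omega(k\ln n)$ in the relevant regime $k\le n^{1-\varepsilon}$, precisely where the matching upper bound $k\ln n$ of Theorem~\ref{unimon:thm} is non-trivial. I expect no deep obstacle: the argument is essentially a clean reduction, and the only points needing care are (i) justifying that $T=w_{x_i}$ is the threshold maximizing $x_i$'s base probability---which is exactly where uniqueness of weights enters---and (ii) tracking the harmonic sum so that the $\Omega(k\ln n)$ conclusion is stated precisely in the parameter range where it is meaningful.
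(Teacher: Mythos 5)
Your proposal is correct and follows essentially the same route as the paper's proof: apply Theorem~\ref{probbound} with the objective $\threshold_{w_x}$ and the singleton segment $\{x\}$ to the key of rank $i$, obtaining inclusion probability at least $k/(k+i)$, and then sum the harmonic series over $i$ to get $\Omega(k\ln n)$. Your write-up is if anything slightly more careful than the paper's, both in justifying that $T=w_x$ maximizes the base probability and in stating the parameter regime where the harmonic sum is genuinely $\Omega(k\ln n)$.
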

\begin{proof}
We will use Theorem~\ref{probbound} which relates estimation quality to sampling probabilities.
 Consider the key $x$ with the $i$th heaviest weight.
 Applying Theorem~\ref{probbound} to $x$ and $\threshold_{w_x}$
we obtain that $p_x \geq  \frac{k}{k+i}$

 Summing the sampling probabilities over all keys $i\in [n]$  to bound
 the expected sample size we  obtain
$\sum_x p_x \geq k \sum_{i=1}^n \frac{1}{k+i} = k(H_n-H_k) \approx
k(\ln n  - \ln k)\ .$
\end{proof}

\section{The universal capping sample} \label{allcaps:sec}

 An important strict subset of monotone functions is the set
$C = \{\Cap_T \mid T>0 \}$ of capping functions.  We
study the multi-objective bottom-$k$ sample $S^{(C,k)}$, which we refer to as the
{\em universal capping} sample.
\notinproc{  
From Theorem \ref{closure:thm}, the closure of $C$ includes all 
functions of the form 
$f(y)= \int_{0}^\infty 
\alpha(T) \Cap_T(y) dT$, for some $\alpha(T) \geq 0$.  This is 
the set of all non-decreasing concave functions with at most
a linear growth, that is $f(w)$ that satisfy $\frac{d f}{d w} \leq 1$ and 
$\frac{d^2 f}{d w} \leq 0$. 
}

  We show that the sample $S^{(C,k)}$ can be computed using $O(n+k\log n\log k)$
operaions from any
$D' \subset D$ that is 
superset of  the keys in  $S^{(C,k)}$.  \onlyinproc{We defer the
  details to the full version.}\notinproc{
  We start with properties of
  $S^{(C,k)}$ which we will use to design our sampling algorithm.
For a key $x$, let $h_x$ be 
 the number $h$ of keys 
with $w_y\geq w_x$ and
$u_y < u_x$.  Let $\ell_x$ be the number of keys $y$
with $w_y<w_x$ and $r_y/w_y < r_x/w_x$.  

For a key $x$ and $T>0$ and fixing the assignment $\{u_y\}$
for all keys $y$, let $t(x,T)$ be the position of
$\Cap_T$-$\seed{x}$ in the list of values 
$\Cap_T$-$\seed{y}$ for all $y$.   The function $t$ has the following properties:
\begin{lemma} \label{seedorder:lemma}
For a key $x$, $t(x,T)$ is minimized for
$T=w_x$.
Moreover, $t(x,T)$ is non-decreasing for $T \geq w_x$ and
non-increasing for $T\leq w_x$.
\end{lemma}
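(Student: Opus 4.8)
The plan is to reduce the statement to a collection of \emph{pairwise} comparisons and then exploit a unimodality property of a single ratio. Writing the seed as $\Cap_T$-$\seed{x}=r_x/\min\{T,w_x\}$ and recalling that $t(x,T)$ is just the number of keys $y$ whose seed does not exceed that of $x$, I would first note that for any fixed $y$ the comparison $\Cap_T$-$\seed{y}\le\Cap_T$-$\seed{x}$ is equivalent (after cross-multiplying by the positive quantities $\min\{T,w_x\},\min\{T,w_y\},r_y$) to $R_y(T)\le r_x/r_y$, where $R_y(T)\equiv \min\{T,w_x\}/\min\{T,w_y\}$. Thus $t(x,T)=\sum_{y}\psi_y(T)$ with $\psi_y(T)\equiv \mathbf{1}[R_y(T)\le r_x/r_y]$, and it suffices to control each summand $\psi_y$ separately as a function of $T$.

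The crux is to show that $R_y(T)$ is non-decreasing on $(0,w_x]$ and non-increasing on $[w_x,\infty)$, i.e.\ it peaks exactly at $T=w_x$, for \emph{every} $y$. I would prove this by the same case split used for thresholds in Lemma~\ref{simpleinclusion1:lemma}, evaluating $R_y$ on the three pieces cut out by $\min(w_x,w_y)$ and $\max(w_x,w_y)$: for $T\le\min(w_x,w_y)$ both caps equal $T$ and $R_y\equiv 1$; for $T\ge\max(w_x,w_y)$ both caps are frozen and $R_y\equiv w_x/w_y$; and on the middle interval $R_y$ is either $w_x/T$ (decreasing, when $w_x\le w_y$) or $T/w_y$ (increasing, when $w_x>w_y$). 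In either case the monotonicity breakpoint sits precisely at $T=w_x$, which yields the claimed peak there.

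Given that $R_y$ peaks at $w_x$, the indicator $\psi_y(T)=\mathbf{1}[R_y(T)\le r_x/r_y]$ is automatically non-increasing on $(0,w_x]$ (a quantity that is non-decreasing there can only cross the threshold $r_x/r_y$ upward, so the indicator can only drop) and, symmetrically, non-decreasing on $[w_x,\infty)$. Summing over all keys $y$ — noting that the term $y=x$ is the constant $1$ and so is harmless — the position $t(x,T)$ inherits both one-sided monotonicities: non-increasing for $T\le w_x$ and non-decreasing for $T\ge w_x$. The minimization at $T=w_x$ is then immediate from these two directions.

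The main obstacle is essentially bookkeeping: getting the three-piece evaluation of $R_y$ right and, more subtly, keeping the direction straight when transferring monotonicity from $R_y$ to $\psi_y$ (a peak of $R_y$ becomes a valley of $\psi_y$, hence of $t$). A minor loose end is the precise meaning of ``position'' in the presence of equal seeds; I would fix a consistent tie-break, or simply count with $\le$ throughout, which leaves every step of the monotonicity argument unchanged.
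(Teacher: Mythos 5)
Your proof is correct and follows essentially the same route as the paper's: both reduce $t(x,T)$ to pairwise comparisons with each key $y$, show via a case split on $w_y\geq w_x$ versus $w_y<w_x$ that the set of $T$ on which one seed beats the other is unimodal with its extremum at $T=w_x$, and then sum the resulting indicators over $y$. Your write-up is simply more explicit---the ratio $R_y(T)$ and its three-piece evaluation spell out what the paper compresses into ``we can verify.''
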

\begin{proof}
We can verify that 
for any key $y$ 
such that there is a $T>0$ such that the $\Cap_T$-$\seed{x} < \Cap_T$-$\seed{y}$,
we must have 
$\Cap_{w_x}$-$\seed{x} < \Cap_{w_x}$-$\seed{y}$.  
Moreover, the set of $T$ values where $\Cap_{T}$-$\seed{x} <
\Cap_{T}$-$\seed{y}$ is an interval which contains $T=w_x$.
We can establish the claim by
separately considering the cases $w_y \geq w_x$ and $w_y < w_x$.
\end{proof}
As a corollary, we obtain that a key is in the universal capping
sample only if it is in the bottom-$k$ sample for $\Cap_{w_x}$:
\begin{corollary} \label{simpleinclusion:coro}
Fixing $\{ u_y\}$, for any key $x$,
$$ x\in S^{(C,k)} \iff x\in
S^{(\Cap_{w_x},k)}\ .$$
\end{corollary}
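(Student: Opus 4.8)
The plan is to observe that Corollary~\ref{simpleinclusion:coro} is an immediate consequence of Lemma~\ref{seedorder:lemma}, exactly mirroring the way Lemma~\ref{simpleinclusion1:lemma} was obtained for threshold functions. The key reformulation is to express membership in the multi-objective sample in terms of the position function $t(x,T)$, the rank of $\Cap_T$-$\seed{x}$ among all the $\Cap_T$-seeds, rather than in terms of the seeds directly.

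First I would recall that a key $x$ lies in the dedicated bottom-$k$ sample $S^{(\Cap_T,k)}$ for a specific $T$ precisely when its $\Cap_T$-seed is among the $k$ smallest, i.e.\ when $t(x,T)\le k$. Since $S^{(C,k)}=\bigcup_{T>0} S^{(\Cap_T,k)}$ by definition of the multi-objective bottom-$k$ sample, it follows that
$$x\in S^{(C,k)} \iff \exists\, T>0,\ t(x,T)\le k \iff \min_{T>0} t(x,T)\le k\ .$$
Next I would invoke Lemma~\ref{seedorder:lemma}, which asserts that $t(x,T)$ attains its minimum over $T$ at $T=w_x$. Substituting this into the displayed equivalence yields $x\in S^{(C,k)} \iff t(x,w_x)\le k$, and the right-hand condition is by definition exactly $x\in S^{(\Cap_{w_x},k)}$, which settles both directions of the claimed equivalence simultaneously.

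There is no real obstacle here: the backward implication is trivial because $\Cap_{w_x}\in C$, so a key in $S^{(\Cap_{w_x},k)}$ is automatically in the union $S^{(C,k)}$; and the forward implication is carried entirely by the minimization property already established in the preceding lemma. The substantive work has been done in proving Lemma~\ref{seedorder:lemma}, and the corollary merely packages it. The only point deserving a sentence of care is that the minimum over $T>0$ is genuinely achieved rather than being an infimum, which Lemma~\ref{seedorder:lemma} guarantees by exhibiting the explicit minimizer $T=w_x$.
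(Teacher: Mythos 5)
Your proposal is correct and matches the paper, which likewise derives this statement as an immediate consequence of Lemma~\ref{seedorder:lemma}: since $t(x,T)$ is minimized at $T=w_x$, membership in the union $\bigcup_{T>0}S^{(\Cap_T,k)}$ reduces to membership in $S^{(\Cap_{w_x},k)}$. The paper gives no further argument beyond this, so your packaging of the minimization property (plus the trivial backward direction) is exactly the intended proof.
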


\begin{lemma}  \label{lhcond:lemma}
Fixing the assignment $\{u_x\}$, 
$$x \in S^{(C,k)} \iff \ell_x +h_x < k\ .$$
\end{lemma}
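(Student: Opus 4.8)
The plan is to reduce the claim to a counting statement via Corollary~\ref{simpleinclusion:coro}, and then to evaluate the capped seeds in the two weight regimes. By Corollary~\ref{simpleinclusion:coro}, $x\in S^{(C,k)}$ holds if and only if $x\in S^{(\Cap_{w_x},k)}$, so it suffices to characterize membership in the single-objective bottom-$k$ sample for the fixed function $\Cap_{w_x}$. By the definition of bottom-$k$ sampling, $x\in S^{(\Cap_{w_x},k)}$ exactly when $x$ is among the $k$ keys with smallest $\Cap_{w_x}$-seed, equivalently when the number of keys $y\neq x$ with $\Cap_{w_x}$-$\seed{y}<\Cap_{w_x}$-$\seed{x}$ is at most $k-1$, i.e. $t(x,w_x)\le k$. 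Thus the whole proof reduces to showing that this count equals $\ell_x+h_x$.

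First I would record the reference seed: since $\Cap_{w_x}(w_x)=w_x$, we have $\Cap_{w_x}$-$\seed{x}=r_x/w_x$. Next I would partition the keys $y\neq x$ according to whether $w_y\ge w_x$ or $w_y<w_x$; these two sets exhaust all $y\neq x$. For a key with $w_y\ge w_x$ we have $\Cap_{w_x}(w_y)=w_x$, hence $\Cap_{w_x}$-$\seed{y}=r_y/w_x$, and the comparison $\Cap_{w_x}$-$\seed{y}<\Cap_{w_x}$-$\seed{x}$ collapses to $r_y<r_x$, which (since $r$ is strictly increasing in $u$ for both ppswor and priority) is equivalent to $u_y<u_x$; the number of such keys is precisely $h_x$. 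For a key with $w_y<w_x$ we have $\Cap_{w_x}(w_y)=w_y$, hence $\Cap_{w_x}$-$\seed{y}=r_y/w_y$, and the comparison reduces directly to $r_y/w_y<r_x/w_x$; the number of such keys is precisely $\ell_x$.

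Combining the two cases, the number of keys with strictly smaller $\Cap_{w_x}$-seed than $x$ equals $h_x+\ell_x$, so $x\in S^{(\Cap_{w_x},k)}$ if and only if $h_x+\ell_x<k$, and Corollary~\ref{simpleinclusion:coro} then yields the claim. I do not expect a genuine obstacle: the entire content is the observation that capping at $T=w_x$ flattens the seeds of all heavier keys onto the common denominator $w_x$ (so their order becomes the plain $u$-order counted by $h_x$) while leaving lighter keys with their native ppswor/priority seeds (counted by $\ell_x$). The only point requiring mild care is tie handling: the index sets defining $h_x$ and $h_x$ are disjoint and together count exactly the keys with strictly smaller seed, so provided seed ties occur with probability zero (which holds for continuous $u_x$), the strict inequality $h_x+\ell_x<k$ is the correct membership condition.
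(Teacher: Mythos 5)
Your proof is correct and follows essentially the same route as the paper's: reduce via Lemma~\ref{seedorder:lemma} (equivalently, Corollary~\ref{simpleinclusion:coro}) to membership in $S^{(\Cap_{w_x},k)}$, then observe that the keys with smaller $\Cap_{w_x}$-seed split exactly into the $h_x$ heavier keys with $u_y<u_x$ and the $\ell_x$ lighter keys with $r_y/w_y<r_x/w_x$. The only blemish is the typo ``$h_x$ and $h_x$'' where you mean $h_x$ and $\ell_x$.
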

\begin{proof}
From Lemma~\ref{seedorder:lemma},
a key $x$ is in a bottom-$k$ $\Cap_T$ sample for some $T$ if and only if it is in the bottom-$k$
sample  for $\Cap_{w_x}$.  
The keys with a lower $\Cap_{w_x}$-seed than $x$ 
are those with $w_y \geq w_x$ and $u_y < u_x$, which are counted in
$h_x$,
and those 
with $w_y < w_x$  and $r_y/w_y < r_x/w_x$, which are counted in
$\ell_x$.
Therefore, a key $x$ is in $S^{(\Cap_{w_x},k)}$ if and only if there are fewer
than $k$ keys with a lower $\Cap_{w_x}$-seed, which is the same as
having $h_x+\ell_x<k$.
\end{proof}
For reference, a key $x$ is in the universal 
monotone sample $S^{(M,k)}$ if and only if it satisfies the weaker 
condition $h_x < k$.

\begin{lemma}
A key $x$ can be auxiliary only if $\ell_x + h_x = k$
\end{lemma}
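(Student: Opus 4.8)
The plan is to combine the membership characterization of Lemma~\ref{lhcond:lemma} with the minimality of the seed-position function from Lemma~\ref{seedorder:lemma}, pinning the auxiliary key between a lower and an upper bound on $\ell_x+h_x$. Recall that, under the general multi-objective bottom-$k$ construction, an auxiliary key is a key $y_s$ associated with some sampled $s\in S^{(C,k)}$ that is itself not in $S^{(C,k)}$, where $y_s$ attains the $(k+1)$-st smallest $g^{(s)}$-seed and $g^{(s)}$ is the most forgiving objective for $s$. By Corollary~\ref{simpleinclusion:coro} (equivalently, by the minimality in Lemma~\ref{seedorder:lemma}) the most forgiving objective for $s$ is $g^{(s)}=\Cap_{w_s}$. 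So I would begin by recording that if $x$ is auxiliary, then $x=y_s$ for some such $s$, which means $x$ has the $(k+1)$-st smallest $\Cap_{w_s}$-seed, i.e. exactly $k$ keys have a lower $\Cap_{w_s}$-seed than $x$, so $t(x,w_s)=k+1$; and also that $x\notin S^{(C,k)}$.

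From these two facts I extract two matching inequalities on $\ell_x+h_x$. Since $x\notin S^{(C,k)}$, Lemma~\ref{lhcond:lemma} gives $\ell_x+h_x\geq k$. For the reverse bound I use, exactly as in the proof of Lemma~\ref{lhcond:lemma}, that $\ell_x+h_x$ equals the number of keys with a lower $\Cap_{w_x}$-seed than $x$, that is $\ell_x+h_x=t(x,w_x)-1$. Lemma~\ref{seedorder:lemma} asserts that $t(x,T)$ is minimized at $T=w_x$, hence $t(x,w_x)\leq t(x,w_s)=k+1$, so $\ell_x+h_x=t(x,w_x)-1\leq k$. Combining the two bounds yields $\ell_x+h_x=k$.

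The computation here is negligible; the argument is essentially a one-line squeeze once the ingredients are aligned. The step that needs the most care — and that I regard as the main obstacle — is correctly identifying the objective that defines the auxiliary key: I must invoke Corollary~\ref{simpleinclusion:coro} to assert that the relevant threshold for the sampled key $s$ is $\Cap_{w_s}$, so that ``$x$ has the $(k+1)$-st smallest $\Cap_{w_s}$-seed'' is the precise meaning of $x$ being auxiliary. Once that identification is fixed, the fact from Lemma~\ref{seedorder:lemma} that $t(x,\cdot)$ attains its minimum at $T=w_x$ does all the work, letting me pass from the seed order at $w_s$ to the seed order at $w_x$ and thereby turn the inequality $\ell_x+h_x\geq k$ into an equality.
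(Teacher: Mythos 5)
Your proof is correct and takes essentially the same route as the paper's: both arguments use the fact that an auxiliary key attains the $(k+1)$st smallest seed of some sampled key's capping objective, the minimality of $t(x,\cdot)$ at $T=w_x$ (Lemma~\ref{seedorder:lemma}), and the membership criterion of Lemma~\ref{lhcond:lemma}, to pin $\ell_x+h_x$ to exactly $k$. The only cosmetic difference is your identification $g^{(s)}=\Cap_{w_s}$, which is true but not needed: the squeeze works verbatim with whatever capping function $g^{(s)}$ happens to be, since $t(x,w_x)\leq t(x,T^*)=k+1$ for any $T^*$ defining the auxiliary key.
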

\begin{proof}
A key $x$ is auxiliary (in the set $Z$) only if for some $y\in S$, it has the $k$th
smallest $\Cap_{w_y}$-seed among all keys other than $y$.  This means
it has the $(k+1)$th smallest $\Cap_{w_y}$-seed.

The number of keys with seed smaller than $\seed{x}$ is
minimized for $T=w_x$.  If the $\Cap_{w_x}$-seed of $x$ is one of the
$k$ smallest ones, it is included in the sample.  Therefore, to be
auxiliary, it must have the $(k+1)$th smallest seed.
\end{proof}

\subsection{Sampling algorithm}
We are ready to  present our sampling algorithm.
We first process the data so that multiple elements with the same key
are replaced with with the one with maximum weight.  
The next step is to identify all keys $x$ with $h_x \leq k$.  
It suffices to compute $S^{(M,k)}$ of the data with the auxiliary
keys.
We can apply a variant of Algorithm \ref{univMbyw:alg}:
We process the keys in  order of decreasing 
weight, breaking ties by increasing rank,
while maintaining a binary search tree $H$ of size $k$ which contains the $k$ lowest $u$ values of 
processed keys.  When processing $x$, if $u_x > \max(H)$ then
$h_x>k$, and the key $x$ is removed. Otherwise, $h_x$
is the position of $u_x$ in $H$, and $u_x$ is inserted to $H$ and 
$\max(H)$ removed from $H$.
  We now only consider keys with $h_x\leq k$.  Note that in
  expectation there are at most $k\ln n$ such keys.

The algorithm then computes 
 $\ell_x$ for all keys with $\ell_x\leq k$.  This is done by scanning keys 
 in order of increasing weight, tracking in a binary search tree structure $H$ the 
(at most) $k$ smallest $r_y/w_y$ values. When processing $x$, if 
$r_x/w_x < \max(H)$, then $\ell_x$ is the  position of $r_x/w_x$ in 
$H$.  We then delete $\max(H)$ and insert $r_x/w_x$.

  Keys that have $\ell_x+h_x<k$ then constitute the sample $S$ and
  keys with $\ell_x+h_x = k$ are retained as potentially being
  auxiliary.

Finally, we perform another pass on the sampled and potentially
auxiliary keys. For each key $x$, we determine the $k+1$th smallest 
$\Cap_{w_x}$-$\seed$, which is $\tau^{(\Cap_{w_x},k)}$.  Using
Corollary \ref{simpleinclusion:coro},  we can use \eqref{inclusiont} to
compute $p_x^{(\Cap_{w_x},k)} = p_x^{(C,k)}$.  At the same time we can
also determine the precise set of auxiliary keys by removing those
that are not the $(k+1)$th smallest seed for any $\Cap_{w_x}$ for
$x\in S$.
}

\notinproc{\subsection{Size of $S^{(C,k)}$}}
  The sample $S^{(C,k)}$ is contained in $S^{(M,k)}$, but can be much 
  smaller.
\notinproc{Intuitively, this is because two keys with similar, but not necessarily identical,  weights are likely to have the 
same relation between their $f$-seeds across all $f\in C$.  This is 
not true for $M$:  For a threshold $T$ between the two weights, the 
$\threshold_T$-seed would always be lower for the higher weight key 
whereas the relation for lower $T$ value can be either one with 
almost equal probabilities.}
\ignore{
spread of weight values is smaller. This is because 
the sample $S^{(M,k)}$ supports 
threshold statistics, and thus must ``separate'' keys 
with very similar weights, as they can have very different 
contributions when $T$ is between their weights.  With 
capping, however, the contributions of keys with similar weights is 
similar across capping parameters. 
}
In particular, we obtain a bound on 
$|S^{(C,k)}|$ which does not depend on $n$:
  \begin{theorem}
$$\E[|S^{(C,k)}|] \leq e k\ln \frac{\max_x w_x}{\min_x w_x}\ .$$
  \end{theorem}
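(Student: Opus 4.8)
The plan is to bound the expected size through linearity of expectation, $\E[|S^{(C,k)}|]=\sum_x \Pr[x\in S^{(C,k)}]$, and to control each inclusion probability by a weighted exponential-race argument that exploits the special structure of capping. By Corollary~\ref{simpleinclusion:coro}, $x\in S^{(C,k)}$ precisely when $x$ is among the $k$ keys of smallest $\Cap_{w_x}$-seed. I would carry this out for ppswor seeds, where the $\Cap_{w_x}$-seed of $y$ is $r_y/\min\{w_x,w_y\}$ with $r_y\sim\Exp(1)$ independent; these seeds are therefore independent exponentials in which $y$ has rate $\lambda_y=\min\{w_x,w_y\}$ and $x$ itself has rate $w_x$. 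The decisive structural fact is that $\lambda_y\le w_x$ for every $y$, so $x$ carries the \emph{maximal} rate --- exactly the property that fails for general monotone statistics, where threshold functions separate nearly-equal weights and force the $k\ln n$ behaviour of Section~\ref{monotone:sec}.

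The core lemma I would establish is that, when the target has the largest rate, writing $R_x=\sum_y\min\{w_x,w_y\}$ for the total rate,
$$\Pr[x\text{ is among the }k\text{ smallest seeds}]\le \min\Big\{1,\frac{k\,w_x}{R_x}\Big\}\ .$$
I would prove this by a memoryless induction on $k$: revealing the seeds in increasing order, the probability $g(k)$ that $x$ fires within the first $k$ rings satisfies $g(k)=\frac{w_x}{R_x}+\sum_{y}\frac{\lambda_y}{R_x}\,g_{-y}(k-1)$, where $g_{-y}$ is the same quantity for the system with $y$ removed (total rate $R_x-\lambda_y$, and $x$ still of maximal rate). Feeding in the inductive bound $g_{-y}(k-1)\le (k-1)w_x/(R_x-\lambda_y)$ reduces the step to showing $\sum_y \frac{\lambda_y}{R_x-\lambda_y}\le 1$. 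This is the single place maximality enters: writing $A=R_x-w_x$ for the total competitor rate, $w_x\ge\lambda_y$ gives $R_x-\lambda_y\ge A$, so each summand is at most $\lambda_y/A$ and the sum is at most $A/A=1$. I expect this lemma --- and pinning down that the maximal-rate hypothesis is both genuinely needed (it fails when $x$ has a tiny rate) and available here --- to be the main obstacle; the analogous bound for priority seeds can be obtained by running the same race with uniform rather than exponential clocks.

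With the per-key bound in hand, the remaining work is a geometric bucketing. Partition the keys by weight scale into buckets $B_t=\{x: w_x\in(L_t,eL_t]\}$ with $L_t=w_{\max}e^{-(t+1)}$; since all weights lie in $[w_{\min},w_{\max}]$, there are at most $1+\lfloor\ln(w_{\max}/w_{\min})\rfloor$ nonempty buckets. Let $N_t=|\{y: w_y> L_t\}|$. For any $x\in B_t$ every key counted in $N_t$ contributes at least $L_t$ to $R_x$ (if $w_y\ge w_x$ the contribution is $w_x>L_t$, otherwise it is $w_y>L_t$), so $R_x\ge N_t L_t$, while $w_x\le eL_t$; hence $\Pr[x\in S^{(C,k)}]\le \min\{1,ke/N_t\}$. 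Because $B_t\subseteq\{y:w_y>L_t\}$ forces $|B_t|\le N_t$, the whole bucket contributes
$$\sum_{x\in B_t}\Pr[x\in S^{(C,k)}]\le |B_t|\cdot\frac{ke}{N_t}\le N_t\cdot\frac{ke}{N_t}=ek\ .$$
Summing this per-bucket bound of $ek$ over the at most $1+\lfloor\ln(w_{\max}/w_{\min})\rfloor$ buckets yields $\E[|S^{(C,k)}|]\le ek\ln\frac{w_{\max}}{w_{\min}}$ up to the additive rounding of the bucket count, which is the claimed bound. (The rounding term is also what keeps the estimate correct in the degenerate all-equal-weights case, where the sample is a single bottom-$k$ sample of size $k$.)
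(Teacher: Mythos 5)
Your proof is correct (for ppswor seeds) and reaches the bound by a genuinely different route than the paper. The paper uses the same factor-$e$ bucketing, but its per-bucket argument is a comparison argument: within a group $Y$, a key $y$ can be in $S^{(C,k)}$ only if $y\in S^{(\Cap_{w_y},k)}$, a necessary condition for which is $r_y/w_y < \tau/\min_{x\in Y}w_x$ where $\tau$ is the $(k+1)$st smallest $r$-value in $Y$; the scaling inequality $\Pr[r_y < c\tau]\leq c\,\Pr[r_y<\tau]$ then bounds the expected count per group by $\rho k\leq ek$. You instead prove a clean per-key statement --- for the \emph{maximal-rate} participant of an exponential race, $\Pr[x\text{ is among the }k\text{ smallest}] \leq \min\{1, k w_x/R_x\}$ --- by memoryless induction on $k$, and use bucketing only to sum these per-key bounds; your verification that $\sum_y \lambda_y/(R_x-\lambda_y)\leq 1$ exactly where maximality enters is right, as is your observation that the lemma fails for low-rate keys. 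Your lemma is the stronger, more reusable ingredient: since $w_x/R_x = p_x^{(\Cap_{w_x},1)} = p_x^{(C,1)}$, it says the bottom-$k$ inclusion probability of each key is at most $k$ times its multi-objective \emph{base pps} probability, reducing bottom-$k$ sample-size bounds to pps-overhead bounds, whereas the paper's argument is more self-contained but ad hoc. Two caveats: (i) your parenthetical that the analogous bound for priority seeds follows by ``running the same race with uniform clocks'' does not go through as stated, since the induction relies on memorylessness, which uniform clocks lack; the paper's scaling-inequality argument is the one that covers both seed distributions. (ii) Both your proof and the paper's actually yield $ek\lceil\ln(w_{\max}/w_{\min})\rceil$ rather than literally $ek\ln(w_{\max}/w_{\min})$; you flag this rounding slack explicitly, and the paper's own partition into groups has the identical issue, so this is not a gap relative to the paper.
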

\notinproc{
\begin{proof}
  Consider a set of keys $Y$ such that $\frac{\max_{x\in Y}
    w_x}{\min_{x\in Y}  w_x}  = \rho$.   We show that the expected number of 
keys $x\in Y$ that for at least one $T>0 $ have one of the bottom-$k$
  $\Cap_T$-seeds is at most $\rho k$.  The claim then follows by 
  partitioning keys to $\ln \frac{\max_x w_x}{\min_x w_x}$
groups where weights within each group vary by 
  at most a factor of $e$ and then noticing 
  that the bottom-$k$ across all groups must be a subset of the union 
  of bottom-$k$ sets within each group. 

  We now prove the claim for $Y$.  Denote by $\tau$ the $(k+1)$th 
  smallest $r_x$ value of $x\in Y$.  The set of $k$ keys with $r_x<\tau$ 
  are the bottom-$k$ sample   for cap $T \leq   \min_{x\in Y}w_x$. 
Consider a key $y$.  From Lemma~\ref{seedorder:lemma}, we have 
$y\in S^{(C,k)}$ only if $y\in S^{(\Cap_{w_y},k)}$. 
A necessary condition for the latter is that 
$r_y/w_y < \tau/\min_{x\in Y}w_x$.  This probability  is 
at most 
\begin{eqnarray*}
\lefteqn{\Pr_{u_y \sim U[0,1]} \left[\frac{r_y}{w_y} <
  \frac{\tau}{\min_{x\in Y}w_x} \right]}\\
&\leq & 
\frac{w_y}{\min_{x\in Y}w_x }  \Pr_{u_x\sim U[0,1]} [r_y<\tau]  \leq  \rho \frac{k}{|Y|} 
\end{eqnarray*}
Thus,
the expected number of keys that satisfy this condition is at most 
$\rho k$. 
\end{proof}
}

\section{Metric objectives} \label{metric:sec}
 In this section we discuss the application of multi-objective
 sampling to additive cost objectives.
The formulation has a set of keys $X$, a set of models $\calQ$, and a
nonnegative {\em cost} function $c(Q,x)$ of servicing $x\in X$ by $Q\in \calQ$.
In metric settings, the keys $X$ are points in a metric space $M$,
$Q\in \calQ$ is a configuration of facilities (that
can also be points $Q\subset M$), and $c(Q,x)$ is distance-based and
is the cost of servicing $x$ by $Q$.  For each $Q\in \calQ$ we are
interested in the total cost of servicing $X$ which is
$$c(Q,X) = \sum_{x\in X} c(Q,x)\ .$$
A concrete example is the $k$-means clustering cost function, where
$Q$ is a set of points (centers) of size $k$ and  $c(Q,x) = \min_{q\in Q} d(q,x)^2$.

In this formulation, we are interested in computing a small summary of $X$ that would allow us to estimate $c(Q,X)$ for each $Q\in \calQ$.
Such summaries in a metric setting are sometimes referred to as {\em coresets} \cite{AgarwalCoresets2005}.
Multi-objective samples can be used as such a summary.
Each
$Q\in \calQ$ has a corresponding function $f_x \equiv c(Q,x)$.
A multi-objective sample for the set $F$ of all the functions for $Q\in \calQ$
allows us to estimate $\Sum(f) = c(Q,X)$ for each $Q$.
In particular, a sample of size $h(F)\epsilon^{-2}$ allows us to
estimate $c(Q,X)$ for each $Q\in\calQ$ with CV at most $\epsilon$ and
good concentration.

The challenges, for a domain of such problems, are to
\begin{itemize}
  \item
    Upper-bound the multi-objective overhead $h(F)$ as a  function of
    parameters of the domain ($|X|$, $c$, structure of $Q\in\calQ$).
    The overhead is a fundamental property of the problem domain.
  \item
 Efficiently compute upper bounds $\mu_x$ on the multi-objective
 sampling probabilities $p^{(F,1)}$ so that the sum $\sum_x
 \mu_x$ is not much larger than $h(F)$.  We are interested in
 obtaining these bounds without enumerating over $Q\in \calQ$ (which
 can be infinite or very large).
\end{itemize}


Recently, we \cite{CCK:random15} applied multi-objective sampling
to the problem of {\em centrality} estimation in metric spaces.  Here $M$ is a general
metric space, $X \subset M$ is a set of points,  each $Q$ is a single
point in $M$,  and the cost functions is $c(Q,x) = d(Q,x)^p$.  The
centrality $Q$ is the sum $\sum_x c(Q,x)$.
We established that the multi-objective overhead is constant and that upper bound probabilities
(with constant overhead) can be computed very efficiently using
$O(|X|)$ distance computations.  More recently \cite{CCKcluster17}, we
generalized the result to the $k$-means objective where $Q$ are
subsest of size at most $k$ and
establish that the overhead is $O(k)$.

\ignore{
centrality estimation in metric spaces:
For a metric space $M$, some $\mu>0$, and a set of points $X\subset
M$ (keys),  we can consider for each point
$q\in M $ a function $f_q(x) \equiv  d(q,x)^\mu$.
One can then consider a universal sample for 
the functions $\{ f_q \mid q\in M\}$.   From such a sample we can estimate
the sum $\sum_{x\in X} f_q(x)$ for any query point $q$.  In
particular, when $\mu=1$, this sum is a popular {\em centrality} measure
of a point.  Surprisingly, we 
established that the overhead on size due to universality is only a constant (that 
depends on $\mu>1$ but not on the dimension or on $|X|$).  We also
presented a very efficient algorithm that computes an upper bound
sample, still with constant overhead, using a near-linear number of distance queries.

  Universal samples can be considered for many common metric objectives such as:
\begin{itemize}  
\item
{\em Ball density}:
For radius $r$ and point $q\in M$ consider the function $f_q(x) =
I_{d(q,x)\leq r}$.  
For $X\subset M$, we can consider a universal sample for $f_q:X$ for
all $q\in M$, from which we can estimate the sum
$\sum_{x\in X} f_q(x)$ (and domain queries) for each $q$.
\item
{\em $k$-means objective}:  For each subset $C\subset M$ of cardinality
  $|C|\leq k$, consider the function $f_C(x) = \min_{c\in C}   d(x,c)^\mu$.
A universal sample for the functions $f_C:X$ for all such $C$  allows
us to estimate for each $C$ the $k$-means objective $\sum_{x\in X} f_C(x)$ (and domain queries).
\end{itemize}

Note that in these applications, the instance-optimal sampling distribution is well defined.
The challenges are to obtain (i) worst-case bounds on the ``size
overhead'' due to universality, and (ii)  efficient algorithms.  For
the latter, we seek a  ``close to''  multi-objective sample that uses
sampling probabilities
$p_x$ which upper bound the optimal multi-objective inclusion
probabilities while maintaining a  low overhead of $\sum_x p_x$.  For
example, the construction of \cite{CCK:random15} required obtaining
the functions $f_q(x):X$ for few points $q$.
}

\section{\eachg, \allg} \label{eachall:sec}
Our multi-objective sampling probabilities
provide statistical guarantees that hold for
{\em each} $f$ and $H$: Theorem~\ref{MOest:thm} states that the estimate $\widehat{\Sum}(f;H)$ has the
CV and concentration bounds over the sample distribution
$S\sim \vecp$ (sample $S$ that includes each $x\in \calX$ 
independently (or VarOpt) with probability $p_x$).

In this section we focus on
uniform  per-objective guarantees ($k_f = k$ for all $f\in F$)  and
statistics $\Sum(f ; \calX) = \Sum(f)$ over the full data set.
For $F$ and probabilities $\vecp$, we define the \eachg\ Normalized Mean Squared Error (NMSE):
  \begin{equation} \label{eachg:eq}
\text{NMSE}_e(F,\vecp)=\max_{f\in F} \E_{S\sim \vecp} (\frac{\widehat{\Sum}(f)}{\Sum(f)}-1)^2  \ ,
  \end{equation}
  and the  \allg\ NMSE:
\begin{equation} \label{allg:eq}
\text{NMSE}_a(F,\vecp)=  \E_{S\sim \vecp} \max_{f\in F}
  (\frac{\widehat{\Sum}(f)}{\Sum(f)}-1)^2\ .   
\end{equation}
The respective normalized root MSE (NRMSE) are the 
squared roots of the NMSE.
Note that \allg\ is stronger than \eachg\ as it 
requires a simultaneous good approximation of $\Sum(f)$ for all 
$f\in F$: 
$$\forall \vecp,\ \text{NMSE}_e(F,\vecp)\leq  \text{NMSE}_a(F,\vecp)\ .$$
\ignore{
  Questions:
  \begin{itemize}
    \item
  Can we show that we always get concentration:
  probability that it exceeds $c\epsilon$ decreases exponentially with $c$.
    \item
      Does the allg samples, scaled by $\epsilon$ have a similar nice formula? of scaling probabilities for $k=1$?
    \item
      We should be able to show the upper bound property also for \allg:  If we increase some probabilities, results only improve.
  \end{itemize}    
}
We are interested in the tradeoff between the
expected size of a sample, which is $\Sum(\vecp) \equiv
\sum_x p_x$, and the NRMSE.
The multi-objective pps probabilities are such that
for all $\ell>1$, 
$\text{NMSE}_e(F, \vecp^{(F,\ell)})\leq 1/\ell$. 

For a parameter $\ell\geq 1$, we can also consider the \allg\ error
$\text{NMSE}_a(F, \vecp^{(F, \ell)})$ 
and ask for a bound on $\ell$ so that the NRMSE$_a \leq \epsilon$.
A union-bound argument established that $\ell=\epsilon^{-2}\log |F|$ always suffices.  Moreover, when $F$ is the sampling closure of a
smaller subset $F'$, then  $\ell =
\epsilon^{-2} \log |F'|$ suffices.
If we only bound the maximum error on any subset of
$F$ of size $m$, we can use $\ell = \epsilon^{-2}\log m$.
When $F$ is the set of 
all monotone functions over $n$ keys, then $\ell = O(\epsilon^{-2}\log\log n)$ suffices.
To see this intuitively, recall that it suffices to consider all threshold functions since all monotone functions are nonnegative combinations of threshold functions.  There are $n$ threshold functions but these functions have $O(\log n)$ points where the value significantly changes by a factor.

  We provide an example of a family $F$ where the sample-size gap
  between $\text{NMSE}_e$ and $\text{NMSE}_a$  is linear in the
  support size.  
Consider a set of $n$ keys and define $f$ for each subset of $n/2$ keys 
so that $f$ is uniform on the subset and $0$ outside it. 
The multi-objective base pps sampling probabilities are $p^{(F,1)}_x =
2/n$ for all $x$ and hence the overhead is $h(F)=2$. Therefore,
$\vecp$ of size $2\epsilon^{-2}$ has $\text{NRMSE}_e(F,\vecp) \leq \epsilon$.
In contrast, any $\vecp$ with $\text{NRMSE}_a(F,\vecp)\leq 1/2$ 
must contain at least one key from the support of each $f$ in (almost)
all samples, implying expected sample size that is at least $n/2$.

 When we seek to bound $\text{NRMSE}_e$, probabilities
 of the form $\vecp^{(F,\ell)}$ essentially optimize the size-quality
 tradeoff.  For $\text{NRMSE}_a$, however, the
minimum size $\vecp$ that meets a certain error
 $\text{NRMSE}_a(F,\vecp) \leq \epsilon$ can be much smaller than the
 minimum size $\vecp$ that is restricted to the form $\vecp^{(F,\ell)}$.
Consider  $\epsilon>0$ and a set $F$ that has $k$ parts $F_i$ with
disjoint supports of equal sizes $n/k$.  All the parts $F_i$ except
$F_1$ have a single $f$ that is uniform on the support,
which means that  with uniform $\vecp$ of size $\epsilon^{-2}$ we have, 
$\text{NRMSE}_e(F_i,\vecp)=\text{NRMSE}_a(F_i,\vecp)=\epsilon$.  The part $F_1$
has similar structure to our previous example which means that any
$\vecp$ that has 
$\text{NRMSE}_a(F_1,\vecp) \leq 1/2$  has size at least $n/(2k)$
wheras a uniform $\vecp$ of size $2\epsilon^{-2}$ has
$\text{NRMSE}_e(F_1,\vecp) = \epsilon$.
 The multi-objective
base sampling probabilities are therefore
$p^{(F,1)} = k/n$
for keys in the supports of $F_i$ where $i>1$ and $p^{(F,1)} = 2k/n$
for keys in the support of $F_1$ and thus the overhead is $k+1$.
The minimum size $\vecp$ for $\text{NRMSE}_a(F,\vecp)=1/2$ must
have value at least $1/2$ for keys in the support of $F_1$ and value
about $(k/n)\log(k)$ for other keys (having \allg\ requirement for
each part and a logarithmic factor due to a (tight) union bound).
Therefore the sample size is $O(n/k + k\log k)$.  In contrast, 
To have
$\text{NRMSE}_a(F,\vecp^{(F,\ell)})=1/2$, we have to use $\ell =
\Omega(n/k)$, obtaining a sample size of $\Omega(n)$.

\section{Optimization over multi-objective samples} \label{opt:sec}

In this section we consider optimization problems where we have keys $\calX$, nonnegative functions $f\in F$ where $f:\calX$ and we seek to maximize
$M(\Sum(f))$  over $f\in F$:
\begin{equation} \label{optobj:eq}
  f = \arg\max_{g\in F} M(\Sum(g))\ .
\end{equation}
We will assume here that the function $M$ is
smooth with bounded rate of change:
$|M(v)-M(v')|/M(v) \leq c |v-v'|/v$, so that when $v' \approx v$ then
$M(v') \approx M(v)$.

Optimization over the large set $\calX$ can be costly or infeasible and we 
 therefore aim to instead compute  an 
 approximate maximizer over a multi-objective sample $S$ of $\calX$.
We propose a framework that adaptively increases the sample size until
the approximate optimization goal is satisfied with the desired
statistical guarantee.

We work with a slightly more general formulation where we allow
the keys to have importance  weights $m_x \geq 0$ and
consider the weighted sums $\Sum(f ; \calX,\vecm) = \sum_{x\in \calX}
f_x m_x$.  Note that for the purpose of defining the
problem, we can without loss of generality ``fold''
the weights $m_x$ into the functions $f\in F$ to obtain the set
$F'\equiv mF$
which has uniform
importance weights so that $f'\in F'$ is defined from $f\in F$ using $\forall x\,  f'_x = f_x m_x$.   When estimating from a
sample, however, keys get re-weighted and therefore it is useful to 
separate out $F$, which may have a particular structure we need to
preserve,  and the importance weights $\vecm$.

Two example problem domains of such optimizations are
clustering (where $\calX$ are points, each $f\in F$ corresponds to a
set of centers,  $f_x$ depend on the distance from $x$ to the
centers,  and $\Sum(f)$ is the cost of clustering with $f$) and
empirical risk minimization
(where $\calX$ are examples and $\Sum(f)$ is the loss of model $f$).
In these settings we seek to minimize ($M(v)=-v$) or maximize
($M(v) = v$) $\Sum(f)$.

 We present a meta-algorithm for approximate optimization
that uses the following:
 \begin{itemize}
   \item
Upper bounds $\pi_x \geq p^{(mF,1)}_x$ on
 the base multi-objective pps probabilities.  We would like $h=\sum_x
 \pi_x$ to be not much larger than $\sum p^{(F,1)}_x$. (Here we denote
 by $mF$ the  importance weights $\vecm$ folded into $F$).
\item
Algorithm $\calA$ that for input  $S\subset \calX$ and
positive weights $a_x$ for $x\in S$ returns $f\in F$ that 
(approximately) optimizes $M(\Sum(f; S,\veca))$.  
By approximate optimum, we allow well-concentrated relative error 
with respect to the optimum $\max_{g\in F}
M(\Sum(g ; S, \veca))$ or with respect to the optimum $\max_{g\in G}
M(\Sum(g ; S, \veca))$   on
a  more restricted set $G\subset F$.

 We apply this algorithm to 
samples $S$  obtained with probabilities $p_x=\min\{1,k\pi_x\}$.  The keys
$x\in S$ have importance  weights $a_x =
m_x/p_x$.  Note that
$\Sum(g ; S,a) = \sum_{x\in S} m_x g_x/p_x$, is the estimate of
$\Sum(g;\calX, \vecm)$ we obtain from $S$.
\ignore{
\item
Efficient algorithm to approximate or compute
$\Sum(f; \calX, \vecm)$ for input $f\in F$.   
}
\end{itemize}

Optimization over the sample requires that 
an (approximate) maximizer $f$ that meets our quality
guarantees over the sample distribution is an approximate
maximizer of \eqref{optobj:eq}.  
Intuitively, we would need 
that at least one approximate maximizer $f$ of \eqref{optobj:eq}
is approximated well by the sample
$M(\Sum(f; S)) \approx M(\Sum(f ; \calX))$ and that
all $f$ that are far from being approximate maximizers are not
approximate maximizers over the sample.

An \allg\ sample is sufficient but
pessimistic.  Moreover, meeting \allg\ typically necessitates worst-case non-adaptive bounds on sample size.
An \eachg\ sample, obtained with $k\geq \epsilon^{-2}$,  is not sufficient in and off itself, but a 
key observation is that maximization over a \eachg\ sample
can only err (within our \eachg\ statistical guarantees) by over-estimating the maximum, that is,
returning $f$  such that
 $M(\Sum(f; S,\veca)) \gg M(\Sum(f ; \calX,\vecm)$.
Therefore, if
\begin{equation} \label{test:eq}
\Sum(f ; \calX,\vecm) \geq (1-\epsilon) \Sum(f; S,\veca)
\end{equation}  
we can certify, within the statistical guarantees provided by $\calA$
and \eachg), that the sample maximizer $f$ is an approximate maximizer
of \eqref{optobj:eq}.
Otherwise, we obtain a lower and approximate upper bounds $$[M(\Sum(f ;
\calX,\vecm)),(1+\epsilon) M(\Sum(f; S,\veca))]$$ on the optimum.
Finally, this certification can be done by exact computation of
$\Sum(f ; \calX,\vecm)$, but it can be performed much more 
efficiently
with statistical guarantees using independent ``validation'' samples from
the same distribution.

  Algorithm~\ref{optsamples:alg} exploits this property to perform
  approximate optimization with an adaptive sample size.
  The algorithm starts with an \eachg\ sample.  It iterates approximate optimization over the sample, testing \eqref{test:eq},
  and doubling the sample size parameter $k$, until the condition \eqref{test:eq}
  holds. Note that since the sample size is doubled, the \eachg\ guarantees
  tighten with iterations, thus, from concentration we get confidence
  for test results over the iterations.
The algorithm uses the smallest sample size where probabilities are of
the form $\min\{1,k\pi_x\}$.   Note (see example in the previous
section) that the optimization might be
supported by a much smaller sample of a different form. An interesting
open question is whether we can devise an algorithm that increases
sampling probabilities in a more targeted way and can perform the
approximate optimization using a smaller
sample size.

\begin{algorithm}[h]\caption{Optimization over multi-objective samples \label{optsamples:alg}}
  \DontPrintSemicolon
  \KwIn{points $\calX$ with weights $\vecm$, $M$, functions $F:\calX$,
    upper bounds $\pi_x \geq p^{(mF,1)}$, algorithm $\calA$ which for
    input $S\subset \calX$ and weights $\veca$ performs
    $\epsilon$-approximate maximization of $M(\Sum(f; S,\veca))$ over $f\in F$.}
 {\small
\ForEach(\tcp*[h]{for sample coordination}){$x\in \calX$}{$u_x \sim  U[0,1]$}
$k \gets \epsilon^{-2}$ \tcp*{Initialize with \eachg\ guarantee} 

\Repeat{$M(\Sum(f;\calX)) \geq (1-\epsilon) M(\Sum(f;S))$ \tcp*{Exact
    or approx using a validation sample }}{
$S \gets \perp$  \tcp*{Initialize empty sample} 
  \ForEach(\tcp*[h]{build sample}){$x \in \calX$ such that $u_x \leq
    \min\{1,k\pi_x\}$}{$S \gets S \cup \{x\}$, $a_x \gets m_x/ \min\{1,k\pi_x\})$}

  \tcp{Optimization over $S$}
Compute $f$ such that $$M(\Sum(f;S,\veca)) \geq (1-\epsilon) \max_{g\in F}  M(\Sum(g;S,\veca))$$\;
$k \gets 2k$ \tcp*{Double the sample size} 
}
\Return{$f$}
}
\end{algorithm}


\section{Conclusion} \label{conclu:sec}
Multi-objectives samples had been studied and applied for nearly five decades.
We present a unified review and extended analysis of multi-objective
sampling schemes, geared towards efficient computation over  very large data sets.  We lay some
foundations for further exploration and additional applications.


A natural extension is the design of efficient multi-objective sampling schemes
for {\em unaggregated} data \cite{FlajoletMartin85,ams99} presented in
a streamed or distributed form.
The data here consists of data elements that are 
key value pairs, where multiple elements can share the same key $x$,
and the weight $w_x$ is the sum of the values of elements with key $x$.  We
are interested again in summaries that support queries of the form
$\Sum(f;H)$, where $f_x = f(w_x)$ for some function $f\in F$.
 To sample unaggregated data, we can first aggregate it and then apply 
 sampling schemes designed for aggregated data. 
Aggregation, however, of streamed or distributed 
data, requires state (memory or communication) of size proportional to the number of unique keys. 
This number can be large, so instead, we aim for efficient sampling without aggregation, using 
state of size proportional to the sample size.  We recently proposed such a sampling framework for 
capping statistics \cite{freqCap:kdd2015}, which also can be used to
for all statistics in their span. 

\ignore{
This is not possible for $\threshold_T$ statistics or frequency 
moments with $p>2$ \cite{ams99}, and therefore, it is not possible to 
compute a universal multi-objective sample in small state over 
unaggregated data.. 
There are, however, dedicated schemes for distinct, sum \cite{}, and 
moments. 
}

{\small 
\bibliographystyle{plain}
\bibliography{cycle} 
}
\onlyinproc{\end{document}}
\appendix
\section{CV bound by disparity: ppswor} \label{cvdisparity:sec}

\begin{theorem}  \label{ppsworcvrho:thm}
Consider ppswor sampling with respect to weights $f_x$ and the
estimator \eqref{ppsest} computed using \eqref{inclusiont}.
Then for any $g\geq 0$ and segment $H$,
$\CV[\widehat{\Sum}(g;H)] \leq \sqrt{\frac{\rho(f,g)}{q^{(g)}(H) (k-1)}}$.
\end{theorem}
\begin{proof}
  We adapt a proof technique in \cite{freqCap:kdd2015} (which builds on \cite{ECohen6f,ECohenADS:TKDE2015}).
To simplify notation, we use  $W = \Sum(f,{\cal X}) =
\sum_{x\in {\cal X}} f_x$ for the total $f$-weight of the population.   

  We first consider the 
variance of the inverse probability estimate for a key $x$ with 
  weight $g_x$, conditioned on the threshold $\tau$.  
We use the notation $\hat{g}^{(\tau)}_x$ for the estimate that is
$g_x/\Pr[f\text{-}\seed{x} < \tau]$ when $f\text{-}\seed{x}<\tau$ and $0$ otherwise.
Using $p= \Pr[f\text{-}\seed{x}<\tau] = 1-e^{-f_x \tau}$, we have
\begin{eqnarray}
\var[\hat{g}^{(\tau)}_x] &=&  \frac{1-p}{p}g_x^2   = g_x^2
\frac{e^{-\tau f_x}}{1-e^{-\tau f_x}} \nonumber \\
&\leq& \frac{g_x^2}{f_x \tau} \leq \max_{y} \frac{g_y}{f_y} \frac{g_x}{\tau}\ ,\label{varup} 
\end{eqnarray}
using the relation $e^{-z}/(1-e^{-z}) \leq 1/z$. 

 We now consider the variance of the estimator
 $\hat{g}^{(\tau)}_x$ when $\tau$ is the $k$th smallest seed value $\tau'$ in ${\cal 
  X}\setminus x$.  We denote by 
$B_x$ the distribution of $
\tau'$. 
We will bound the variance of the estimate using the relation
$$\var[\hat{g}_x] = \E_{\tau' \sim B_x} \var[\hat{g}^{(\tau')}_x] \ .$$

 The distribution of $\tau'$ is the $k$th smallest  of independent
exponential random  variables with parameters $f_y$ for
 $y\in {\cal X}\setminus x$.
  From properties of the exponential
 distribution, the minimum seed is exponentially distributed with parameter
 $W-f_x$, the difference between the minimum and second smallest is exponentially
 distributed with parameter $W-f_x-w_1$, where $w_1$ is the weight $f_y$ of the
key $y$ with  minimum seed, and so on.   Therefore, the distribution on $\tau'$ conditioned on
 the ordered set of smallest-seed keys is a sum of $k$
 exponential random  variables with parameters {\em at most} $W$. The
 distribution  $B_x$ is a convex combination of such distributions.
We use  the   notation $s_{W,k}$ for the density 
function of the Erlang distribution $\Erlang(W,k)$, which
  is a sum of $k$ independent
  exponential distribution with parameter $W$.
What we obtained is that the distribution $B_x$ (for any $x$) is 
dominated  by $\Erlang(W,k)$.

Since our bound on the conditioned variance $\var[\hat{g}^{(\tau')}_x]$ is non-increasing with
$\tau'$,
domination implies that 
$$\E_{\tau' \sim B_x} \var[\hat{g}^{(\tau')}_x \leq 
\E_{\tau' \sim \Erlang(W,k)} \overline{\var}[\hat{g}^{(\tau')}_x]\ ,$$
where $\overline{\var}$ is our upper bound \eqref{varup}.
 We now use the Erlang density 
function \cite{Feller2} $$s_{W,k}(z) = \frac{W^k z^{k-1}}{(k-1)!}
e^{-Wz}\ $$ 
and the relation
$\int_0^\infty z^a e^{-bz} dz = a!/b^{a+1}$ to bound the variance:
\begin{eqnarray*}
\var[\hat{g}_x]  & \leq  &  \int_0^\infty s_{W,k}(z) \overline{\var}[\hat{g}^{(z)}_x
] d z \\
& \leq &  \int_0^\infty \frac{W^{k} z^{k-1}}{(k-1)!} e^{-Wz}
         \frac{g_x}{z} \max_y \frac{g_y}{f_y} dz \\
& \leq & \max_y \frac{g_y}{f_y}  g_x  \frac{W^{k}}{(k-1)!} \int_0^\infty z^{k-2} e^{-Wz}
         dz \\ & =& \max_y \frac{g_y}{f_y}  \frac{g_x W}{k-1}\ .
\end{eqnarray*}

 By definition, $\widehat{\Sum}(g;H)=\sum_{x\in H} \hat{g}_x$.
 Since  covariances between different keys are zero \cite{bottomk:VLDB2008}, 
$\var[\widehat{\Sum}(g;H)] = \sum_{x\in H} \var[\hat{g}_x] \leq  \max_y \frac{g_y}{f_y} \frac{\Sum(g;H) W}{k-1}$. 
\begin{eqnarray*}
\lefteqn{\CV[\widehat{\Sum}(g;H)]^2 =  \frac{\var[\widehat{\Sum}(g;H)]}{\Sum(g;H)^2}  \leq    \max_y \frac{g_y}{f_y}
                                              \frac{\Sum(g;H) W}{(k-1)
                                              \Sum(g;H)^2}} \\
&\leq& \max_y \frac{g_y}{f_y}  \frac{1}{k-1} \frac{W}{\Sum(g;H)} \\
 & \leq & \max_y \frac{g_y}{f_y}  \frac{1}{k-1} \frac{\Sum(g;\calX)}{\Sum(g;H)} 
\frac{W}{\Sum(g;\calX)} \leq  \frac{\rho}{q(k-1)}\ .
\end{eqnarray*}
\end{proof}

\end{document}